\newcommand{\N}{\mathbb N}
\newcommand{\R}{\mathbb R}
\newcommand{\C}{\mathbb C}
\renewcommand{\H}{\mathcal H}
\newcommand{\A}{\mathcal{A}}
\newcommand{\E}{\mathcal{E}}
\newcommand{\mB}{\mathcal{B}}
\newcommand{\Esg}{\mathcal{E}_s^{\gamma}}
\def\Re{{\mathrm{Re}}}
\def\Im{{\mathrm{Im}}}
\newcommand{\bra}[1]{\left\langle #1 \right|}
\newcommand{\ket}[1]{\left| #1 \right\rangle}
\newcommand{\p}[2]{\langle #1 , #2 \rangle}
\newcommand{\spn}{\textnormal{span}}
\renewcommand{\mid}{~\middle|~}
\newcommand{\Tr}{\operatorname{Tr}}
\def\cs{\mathrm{cs}}
\newcommand{\sa}{\textnormal{sa}}
\newcommand{\br}[1]{\langle#1\rangle}
\newcommand{\bbr}[1]{\llangle #1 \rrangle^{1L}_N}
\newcommand{\brr}[1]{\boldsymbol\prec\!#1\!\boldsymbol\succ}
\newcommand{\ev}{\textup{ev}}
\newcommand{\odd}{\textup{odd}}
\renewcommand{\th}{^{\text{th}}}
\newcommand{\image}[2]{\includegraphics[width=#1\textwidth, height=#1\textheight,keepaspectratio]{#2}}
\newsavebox{\@brx}
\newcommand{\llangle}[1][]{\savebox{\@brx}{\(\m@th{#1\langle}\)}%
  \mathopen{\copy\@brx\kern-0.5\wd\@brx\usebox{\@brx}}}
\newcommand{\rrangle}[1][]{\savebox{\@brx}{\(\m@th{#1\rangle}\)}%
  \mathclose{\copy\@brx\kern-0.5\wd\@brx\usebox{\@brx}}}
\tikzset{edge/.style={decorate, decoration=snake}}
\tikzset{puntjes/.style={dash, pattern=on 0pt off 2\pgflinewidth}}
\tikzset{streepjes/.style={dash pattern=on 8pt off 4pt}}
\newcommand{\ncvertex}[1]{\filldraw (#1) circle (12.5pt);
	{\color{white}\filldraw (#1) circle (10.2pt);}
	\node at (#1) {\scalebox{0.9}{\huge $f$}};}
\newtheorem{thm}{Theorem}[section]
\newtheorem{lem}[thm]{Lemma}
\newtheorem{prop}[thm]{Proposition}
\newtheorem{cor}[thm]{Corollary}
\newtheorem{defi}[thm]{Definition}
\newtheorem{exam}[thm]{Example}
\begin{document}

\title{Cyclic cocycles and one-loop corrections in the spectral action}
\author{Teun D.H. van Nuland}
\address{Radboud University, Heyendaalseweg 135, 6525 AJ Nijmegen, The Netherlands}
\email{t.vannuland@math.ru.nl}

\author{Walter D. van Suijlekom}
\address{Radboud University, Heyendaalseweg 135, 6525 AJ Nijmegen, The Netherlands}
\email{waltervs@math.ru.nl}

\date{\today}

\begin{abstract}
We present an intelligible review of recent results concerning cyclic cocycles in the spectral action and one-loop quantization.
We show that the spectral action, when perturbed by a gauge potential, can be written as a series of Chern--Simons
actions and Yang--Mills actions of all orders. In the odd orders, generalized Chern--Simons forms are integrated against an
odd $(b,B)$-cocycle, whereas, in the even orders, powers of the curvature are integrated against $(b,B)$-cocycles that are
Hochschild cocycles as well. In both cases, the Hochschild cochains are derived from the Taylor series expansion of the
spectral action $\Tr(f(D+V))$ in powers of $V=\pi_D(A)$, but unlike the Taylor expansion we expand in increasing order
of the forms in $A$. We then analyze the perturbative quantization of the spectral action in noncommutative geometry
and establish its one-loop renormalizability as a gauge theory. We show that the one-loop counterterms are of the same
Chern--Simons--Yang--Mills form so that they can be safely subtracted from the spectral action. A crucial role will be played
by the appropriate Ward identities, allowing for a fully spectral formulation of the quantum theory at one loop.
  
\end{abstract}

\maketitle

\tableofcontents

\section{Introduction}
\label{sct:Cyclic Introduction}
The spectral action \cite{CC96,CC97} is one of the key instruments in the applications of noncommutative geometry to particle physics. With inner fluctuations \cite{C96} of a noncommutative manifold playing the role of gauge potentials, the spectral action principle yields the corresponding Lagrangians. Indeed, the asymptotic behavior of the spectral action for small momenta leads to experimentally testable field theories, by interpreting the spectral action as a classical action and applying the usual renormalization group techniques. In particular, this provides the simplest way known to geometrically explain the dynamics and interactions of the gauge bosons and the Higgs boson in the Standard Model Lagrangian as an effective field theory \cite{CCM07} (see also the textbooks \cite{CM07,Sui14}). More general noncommutative manifolds (spectral triples) can also be captured by the spectral action principle, leading to models beyond the standard model as well. As shown in \cite{CC06}, if one restricts to the scale-invariant part, one may naturally identify a Yang--Mills term and a Chern--Simons term to elegantly appear in the spectral action. From the perspective of quantum field theory, the appearance of these field-theoretic action functionals sparks hope that we might find a way to go beyond the classical framework provided by the spectral action principle. It is thus a natural question whether we can also field-theoretically describe the full spectral action, without resorting to the scale-invariant part. 


Motivated by this, we study the spectral action when it is expanded in terms of inner fluctuations associated to an arbitrary noncommutative manifold, without resorting to heat-kernel techniques. Indeed, the latter are not always available and an understanding of the full spectral action could provide deeper insight into how gauge theories originate from noncommutative geometry. Let us now give a more precise description of our setup.

We let $(\A,\H,D)$ be an finitely summable spectral triple. If $f : \R \to \C$ is a suitably nice function we may define the spectral action \cite{CC97}:
$$
\Tr (f(D)).
$$
An inner fluctuation, as explained in \cite{C96}, is given by a Hermitian universal one-form
\begin{align}\label{eq:A uitgeschreven}
	A=\sum_{j=1}^n a_jdb_j\in\Omega^1(\A),
\end{align}
for elements $a_j,b_j\in\A$. The terminology `fluctuation' comes from representing $A$ on $\H$ as
\begin{align}\label{eq:V uitgeschreven}
	V:=\pi_D(A)=\sum_{j=1}^n a_j[D,b_j]\in\mB(\H)_\sa,
\end{align}
and fluctuating $D$ to $D+V$ in the spectral action.
 %
The variation of the spectral action under the inner fluctuation is then given by
\begin{align}\label{variation of SA}
	\Tr(f(D+V))-\Tr(f(D)).
\end{align}
As spectral triples can be understood as noncommutative spin$^\text{c}$ manifolds (see \cite{C08}) encoding the gauge fields as an inner structure, one could hope that perturbations of the spectral action could be understood in terms of noncommutative versions of geometrical, gauge theoretical concepts. Hence we would like to express \eqref{variation of SA} in terms of universal forms constructed from $A$. To express an action functional in terms of universal forms, one is naturally led to cyclic cohomology. As it turns out, hidden inside the spectral action we will identify an odd $(b,B)$-cocycle $(\tilde\psi_1,\tilde\psi_3,\ldots)$ and an even $(b,B)$-cocycle $(\phi_2,\phi_4,\ldots)$ for which $b\phi_{2k}=B\phi_{2k}=0$, i.e., each Hochschild cochain $\phi_{2k}$ forms its own $(b,B)$-cocycle $(0,\ldots,0,\phi_{2k},0,\ldots)$. On the other hand, the odd $(b,B)$-cocycle $(\tilde\psi_{2k+1})$ is truly infinite (in the sense of \cite{C94}). 

 The key result is that for suitable $f:\R\to\C$ we may expand
\begin{align}\label{eq:expansion intro}
	\Tr(f(D+V)-f(D))=\sum_{k=1}^\infty\left(\int_{\psi_{2k-1}}\cs_{2k-1}(A)+\frac{1}{2k}\int_{\phi_{2k}}F^{k}\right),
\end{align}
	in which the series converges absolutely. Here $\psi_{2k-1}$ is a scalar multiple of $\tilde\psi_{2k-1}$, $F_t=tdA+t^2A^2$, so that $F=F_1$ is the curvature of $A$, and $\cs_{2k-1}(A)=\int_0^1 AF_t^{k-1}dt$ is a generalized noncommutative Chern--Simons form. 

As already mentioned, a similar result was shown earlier to hold for the scale-invariant part $\zeta_D(0)$ of the spectral action. Indeed,  Connes and Chamseddine \cite{CC06} expressed the variation of the scale-invariant part in dimension $\leq 4 $ as
\begin{equation*}
\zeta_{D+V}(0) - \zeta_D(0) = - \frac 1 4\int_{\tau_0} (dA+A^2) + \frac 12 \int_\psi \left(A d A + \frac 2 3 A^3\right),
\end{equation*}
for a certain Hochschild 4-cocycle $\tau_0$ and cyclic 3-cocycle $\psi$.

It became clear in \cite{NS21} that an extension of this result to the full spectral action is best done by using multiple operator integrals \cite{ST} instead of residues. It allows for stronger analytical results, and in particular allows to go beyond dimension $4$.
Moreover, for our analysis of the cocycle structure that appears in the full spectral action we take the Taylor series expansion as a starting point, and for working with such expansions multiple operator integrals provide the ideal tools, as shown by the strong results in \cite{ACDS09,CS18,Skr13,Sui11}. In \cite{NS21} we pushed these results further still, by proving estimates and continuity properties for the multiple operator integral when the self-adjoint operator has an $s$-summable resolvent, thereby supplying the discussion here with a strong functional analytic foundation.
This article will start with a review of the results of \cite{NS21} without involving multiple operator integration techniques. Through the use of abstract brackets, we will investigate the interesting cyclic structure that exists within the spectral action, with all analytical details taking place under the hood.


We work out two interesting possibilities for application of our main result and the techniques used to obtain it. The first application is to index theory. One can show that the $(b,B)$-cocycles $\phi$ and $\psi$ are \textit{entire} in the sense of \cite{C88a}. This makes it meaningful to analyze their pairing with K-theory, which we find to be trivial in Section \ref{sct:vanishing pairing}. 

The second application is to quantization. In Section \ref{sct:One-Loop}, though evading analytical difficulties, we will take a first step towards the quantization of the spectral action within the framework of spectral triples. Using the asymptotic expansion proved in Theorem \ref{thm:asymptotic expansion}, and some basic quantum field theoretic techniques, we will propose a one-loop quantum effective spectral action and show that it satisfies a similar expansion formula, featuring in particular a new pair of $(b,B)$-cocycles.

Although the main aim of this paper is to give a simple review of the results of \cite{NS21} and \cite{NS21b}, some essential novelty is also provided.
 In order to connect to the quantization results of \cite{NS21b}, the results of \cite{NS21} are slightly generalized as well as put into context.
Moreover, this paper gives a mathematically precise underpinning of the results presented in \cite{NS21b}, which was geared towards a physics audience. We hope that the discussion presented here is clear to mathematicians with or without affinity to physics.

\section{Taylor expansion of the spectral action}
Consider a finitely summable spectral triple $(\A,\H,D)$ (in the sense that for some $s$ the operator $(i-D)^{-s}$ is trace-class). Given the fluctuations of $D$ to $D+V$ as explained in the introduction, we are interested in a Taylor expansion of the spectral action: 
	\begin{align}\label{eq:spectraction brackets}
		\Tr(f(D+V)-f(D)) &=\sum_{n=1}^\infty\frac{1}{n!}\frac{d^n}{dt^n}\Tr(f(D+tV))\big|_{t=0}\nonumber\\
		&=\sum_{n=1}^\infty\frac{1}{n}\br{V,\ldots,V},
	\end{align}
where $\br{V,\ldots,V}$ is a notation for ($1/(n-1)!$ times) the $n^{\text{th}}$ derivative of the spectral action, defined below, and dependent on $f$ and $D$.
        Such an expansion exist under varying assumptions on $f$, $D$, and $V$, see for instance \cite{Han06,Sui11,Skr13,NSkr21,NS21}. When we are interested in the inner fluctuations of the form $V = \pi_D(A)$ as in Equation \eqref{eq:V uitgeschreven}, a convenient function class in which $f$ should lie is given as in \cite{NS21} by
\begin{align}\label{eq:function class}
	\Esg:=\left\{f\in C^\infty \mid
	\begin{aligned}
	&\text{there exists $C_f\geq 1$ s.t. } \|\widehat{(fu^m)^{(n)}}\|_1\leq (C_f)^{n+1}n!^\gamma \\
	&\text{for all $m=0,\ldots,s$ and $n\in\mathbb N_0$}
	\end{aligned}
	\right\},
\end{align}
for $\gamma\in (0,1]$ a number, and $s$ the summability of the pertinent spectral triple.   Indeed, as shown in \cite{NS21} if $f \in \Esg$ we have good control over the expansion appearing on the right-hand side of \eqref{eq:spectraction brackets}.

  For our present expository purposes, however, it is sufficient to assume that $f'$ is compactly supported and analytic in a region of $\C$ containing a rectifiable curve $\Gamma$ which surrounds the support of $f$ in $\R$. In this case we have
  \begin{align}
    \label{eq:br cycl}
		\br{V_1,\ldots,V_n}= \frac{1}{2\pi i}\oint_\Gamma f'(z) \Tr\left(\prod_{j=1}^n V_j(z-D)^{-1}\right).
  \end{align}
  A concrete expression can be also obtained in terms of divided differences of $f$. Indeed, for a self-adjoint operator $D$ in $\H$ with compact resolvent, we let $\varphi_1,\varphi_2,\ldots$ be an orthonormal basis of eigenvectors of $D$, with corresponding eigenvalues $\lambda_1,\lambda_2,\ldots$. Recall Cauchy's integral formula for divided differences \cite[Chapter I.1]{Don74}:
  $$
g[x_0, \ldots x_n] = \frac{1}{2\pi i} \oint \frac{g(z) }{(z-x_0)\cdots (z-x_n)} dz,
$$
with the contour enclosing the points $x_i$. This then yields 
\begin{align}
\br{V,\ldots,V}	&=\frac{1}{n}\sum_{i_1,\ldots,i_n\in\N}f'[\lambda_{i_1},\ldots,\lambda_{i_n}]V_{i_1i_2}\cdots V_{i_{n-1}i_n}V_{i_{n}i_1}.\label{eq:SA divdiff}
\end{align}
where $V_{kl}:=\p{\varphi_k}{V\varphi_l}$ denote the matrix elements of $V$.
This formula appears in \cite[Corollary 3.6]{Han06} and, in higher generality, in \cite[Theorem 18]{Sui11}.
The formula \eqref{eq:SA divdiff} gives a very concrete way to calculate derivatives of the spectral action, as well as to calculate the Taylor series of a perturbation of the spectral action.

For our algebraic results we only need two simple properties of the bracket $\br{\cdot}$, stated in the following lemma.

\begin{lem}\label{cycl bracket}
For $V_1,\ldots,V_n\in\mB(\H)$ and $a\in\A$ we have
\begin{enumerate}[label=\textnormal{(\Roman*)}]
	\item $\br{V_1,\ldots,V_n}=\br{V_n,V_1,\ldots,V_{n-1}},$\label{cyclicity}
	\item $ \br{aV_1,V_2,\ldots,V_n}-\br{V_1,\ldots,V_{n-1},V_{n}a}=\br{V_1,\ldots,V_{n},[D,a]}$.\label{commutation}
\end{enumerate}
\end{lem}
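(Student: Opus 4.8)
The plan is to work directly from the contour-integral representation \eqref{eq:br cycl}, reducing both statements to two ingredients: cyclicity of the operator trace and a single resolvent commutator identity. Throughout I abbreviate the resolvent by $R\equiv R(z)=(z-D)^{-1}$, so that the integrand of $\br{V_1,\ldots,V_n}$ is $f'(z)\Tr\bigl(V_1RV_2R\cdots V_nR\bigr)$.

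For part \ref{cyclicity}, I would observe that each $V_j$ is immediately followed by exactly one factor $R$, so the full integrand has the cyclically symmetric shape $V_1R\,V_2R\cdots V_nR$. Writing $X=V_1RV_2R\cdots V_{n-1}R$ and $Y=V_nR$ and applying $\Tr(XY)=\Tr(YX)$ moves the block $V_nR$ from the end to the front, producing $\Tr\bigl(V_nRV_1R\cdots V_{n-1}R\bigr)$, which is exactly the integrand of $\br{V_n,V_1,\ldots,V_{n-1}}$. Since this holds for every $z\in\Gamma$, integrating against $\tfrac{1}{2\pi i}f'(z)$ yields \ref{cyclicity}.

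For part \ref{commutation}, the key computation is the resolvent identity $R[D,a]R=Ra-aR$, which follows by writing $[D,a]=-\bigl(R^{-1}a-aR^{-1}\bigr)$ (using $R^{-1}=z-D$) and multiplying by $R$ on both sides. I would apply this to the last resolvent pair in the integrand of the right-hand side $\br{V_1,\ldots,V_n,[D,a]}$, namely to the block $R[D,a]R$ sitting after $V_n$, to obtain
\begin{align*}
\Tr\bigl(V_1R\cdots V_nR[D,a]R\bigr)
=\Tr\bigl(V_1R\cdots V_nRa\bigr)-\Tr\bigl(V_1R\cdots V_n\,aR\bigr).
\end{align*}
In the first trace I would use cyclicity once more to carry the trailing $a$ around to the front, giving $\Tr\bigl((aV_1)R\,V_2R\cdots V_nR\bigr)$, the integrand of $\br{aV_1,V_2,\ldots,V_n}$; the second trace is already the integrand of $\br{V_1,\ldots,V_{n-1},V_na}$. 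Integrating against $\tfrac{1}{2\pi i}f'(z)$ then gives \ref{commutation}.

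The only genuine subtlety is analytic rather than algebraic: cyclicity of the trace and the splitting into two separate traces are legitimate only once the relevant operator products are trace-class, which is where the finite summability of the spectral triple and the compact support of $f'$ enter. I would either invoke the trace-class estimates established in \cite{NS21} to justify manipulating each integrand pointwise in $z\in\Gamma$, or equivalently keep the $z$-integral in place throughout and use that $\oint_\Gamma f'(z)R(z)^{k}\,dz$ is trace-class once $k$ exceeds the summability. Granting this, the two identities are immediate from the displayed manipulations, and I expect this trace-class bookkeeping — rather than any of the algebra — to be the one place requiring care.
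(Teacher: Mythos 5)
Your proposal is correct and follows essentially the same route as the paper: the contour-integral formula \eqref{eq:br cycl}, cyclicity of the trace for \ref{cyclicity}, and the resolvent commutator identity $(z-D)^{-1}a - a(z-D)^{-1} = (z-D)^{-1}[D,a](z-D)^{-1}$ for \ref{commutation}. The only difference is cosmetic: the paper simply restricts to finite-dimensional Hilbert spaces and defers the analytic justification to \cite{NS21}, whereas you sketch the trace-class bookkeeping explicitly.
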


\begin{proof}
  We will omit all analytical details and give a proof for finite-dimensional Hilbert spaces only. The full proof involving multiple operator integrals can be found in \cite{NS21} (as Lemma 14).

  In finite-dimensions we may use formula \eqref{eq:br cycl} for the bracket. Clearly (I) then follows directly from the tracial property. Note that the left-hand side of equality (II) comes down to the commutator of $a$ with the resolvent $(z-D)^{-1}$, for which we have the equality
  $$
(z-D)^{-1} a - a(z-D)^{-1} = (z-D)^{-1} [D,a] (z-D)^{-1} 
  $$
This readily leads to the right-hand side in (II).
\end{proof}

\section{Cyclic cocycles in the spectral action}
We now generalize a little and consider a collection of functions $\brr{\cdot} : \mB(\H)^{\times n}\to\R$, $n\in\N$, satisfing
	\begin{enumerate}[label=\textnormal{(\Roman*)}]
	\item $\brr{V_1,\ldots,V_n}~=~\brr{V_n,V_1,\ldots,V_{n-1}},$\label{cyclicity general}
	\item $\brr{aV_1,V_2,\ldots,V_n}-\brr{V_1,\ldots,V_{n-1},V_na}~=~\brr{V_1,\ldots,V_n,[D,a]}$\label{commutation general}
\end{enumerate}		
In view of Lemma \ref{cycl bracket} above, the brackets $\br{\cdot}$ that appear in the Taylor expansion of the spectral action form a special case of these generalized brackets $\brr{\cdot}$ ---and of course form the key motivation for introducing them. However, such structures pop up in other places as well, for instance \cite{Liu22,GSW}, cf. \cite[Proposition 3.2 and Remark 3.2]{Hock}. In Section \ref{sct:One-Loop}, we will introduce yet another instance of $\brr{\cdot}$, in order to obtain one-loop corrections.

Therefore, in contrast to \cite{NS21}, the following discussion will involve the abstract bracket $\brr{\cdot}$ instead of the explicit $\br{\cdot}$.

\subsection{Hochschild and cyclic cocycles}
When the above brackets $\brr{\cdot}$ are evaluated at one-forms $a[D,b]$ associated to a spectral triple, the relations (I) and (II) can be translated nicely in terms of the coboundary operators appearing in cyclic cohomology. This is very similar to the structure appearing in the context of index theory, see for instance \cite{GS89,Hig06}.

Let us start by recalling the definition of Hochschild cochains and the boundary operators $b$ and $B$ from \cite{C85}.

\begin{defi}
If $\A$ is an algebra, and $n\in\N_0$, we define the space of \textit{Hochschild $n$-cochains}, denoted by $\mathcal{C}^n(\A)$, as the space of $(n+1)$-linear
functionals $\phi$ on $\A$ with the property that if $a_j =1$ for some $j \geq 1$, then $\phi(a_0,\ldots,a_n) = 0$.
\end{defi}
For such cochains we may use, as in \cite{C94}, an integral notation on universal differential forms that is defined by linear extension of 
	$$\int_{\phi}a_0da_1\cdots da_n:=  \phi(a_0,a_1,\ldots,a_n).$$

\begin{defi}
Define operators $b : \mathcal{C}^{n}(\A) \to \mathcal{C}^{n+1}(\A)$ and $B: \mathcal{C}^{n+1}(\A) \to \mathcal{C}^{n}(\A)$ by
\begin{align*}
b\phi(a_0, a_1,\dots, a_{n+1})
:=& \sum_{j=0}^n (-1)^j \phi(a_0,\dots, a_j a_{j+1},\dots, a_{n+1})\\
& + (-1)^{n+1} \phi(a_{n+1} a_0, a_1,\dots, a_n) ,\\
B \phi(a_0 ,a_1, \ldots, a_n) :=& 
\sum_{j=0}^n (-1)^{nj}\phi(1,a_j,a_{j+1},\ldots, a_{j-1}).
\end{align*}
\end{defi}
Note that $B = \mathbf{A} B_0$ in terms of the operator $ \mathbf{A}$ of cyclic anti-symmetrization and the operator defined by $B_0 \phi (a_0, a_1, \ldots, a_n) = \phi(1,a_0, a_1,\ldots, a_n)$. Note that in integral notation we simply have
$$
\int_{B_0 \phi} a_0 d a_1 \cdots d a_n = \int_{\phi} da_0 da_1 \cdots da_n. 
$$
One may check that the pair $(b,B)$ defines a double complex, \textit{i.e.} $b^2 = 0,~ B^2=0,$ and $bB +Bb =0$. Hochschild cohomology now arises as the cohomology of the complex  $(\mathcal{C}^n(\A),b)$. In contrast, we will be using \textit{periodic cyclic cohomology}, which is defined as the cohomology of the totalization of the $(b,B)$-complex. That is to say, 
\begin{align*}
\mathcal{C}^\ev(\A) = \bigoplus_k \mathcal{C}^{2k} (\A) ; \qquad \mathcal{C}^{\odd}(\A) = \bigoplus_k \mathcal{C}^{2k+1} (\A),
\end{align*}
form a complex with differential $b+B$ and the cohomology of this complex is called periodic cyclic cohomology. We will also refer to a periodic cyclic cocycle as a cyclic cocycle or a $(b,B)$-cocycle. Explicitly, an odd $(b,B)$-cocycle is thus given by a sequence
$$
(\phi_1, \phi_3, \phi_5, \ldots),
$$
where $\phi_{2k+1} \in \mathcal{C}^{2k+1}(\A)$ and 
$$
b \phi_{2k+1} + B \phi_{2k+3} = 0 ,
$$
for all $k \geq 0$, and also $B \phi_1 = 0$. An analogous statement holds for even $(b,B)$-cocycles.

\subsection{Cyclic cocycles associated to the brackets}
\label{sct:Cyclic cocycles associated to multiple operator integrals}

In terms of the generic bracket $\brr{\cdot}$ satisfying \ref{cyclicity general} and \ref{commutation general}, we define the following Hochschild $n$-cochain:
\begin{align}\label{eq:def phi_n}
	\phi_n(a_0,\ldots,a_n):=\brr{a_0[D,a_1],[D,a_2],\ldots,[D,a_{n}]} \qquad  (a_0, \ldots, a_n \in \A).
\end{align}
	We easily see that $B_0\phi_n$ is invariant under cyclic permutations, so that $B\phi_n=nB_0\phi_n$ for odd $n$ and $B\phi_n=0$ for even $n$. Also, $\phi_n(a_0,\ldots,a_n)=0$ when $a_j=1$ for some $j\geq1$. We put $\phi_0:=0$.
\begin{lem}\label{lem:b}
	We have $b\phi_n=\phi_{n+1}$ for odd $n$ and we have $b\phi_n=0$ for even $n$.
\end{lem}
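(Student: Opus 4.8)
The plan is to evaluate $b\phi_n$ on arbitrary arguments $a_0,\ldots,a_{n+1}\in\A$ and to reorganize the resulting sum of brackets using a single ``local integration by parts'' identity extracted from \ref{cyclicity general} and \ref{commutation general}. First I would write out
\[
b\phi_n(a_0,\ldots,a_{n+1})=\sum_{j=0}^{n}(-1)^j\phi_n(a_0,\ldots,a_ja_{j+1},\ldots,a_{n+1})+(-1)^{n+1}\phi_n(a_{n+1}a_0,a_1,\ldots,a_n),
\]
substitute the definition $\phi_n(b_0,\ldots,b_n)=\brr{b_0[D,b_1],[D,b_2],\ldots,[D,b_n]}$, and apply the Leibniz rule $[D,a_ja_{j+1}]=[D,a_j]a_{j+1}+a_j[D,a_{j+1}]$ to each merged commutator. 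This splits every middle term ($1\le j\le n$) into a \emph{left piece} carrying $[D,a_j]a_{j+1}$ and a \emph{right piece} carrying $a_j[D,a_{j+1}]$.

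Second, I would establish the local move: for $a\in\A$ and operators $V_1,\ldots,V_m$,
\[
\brr{V_1,\ldots,V_ia,V_{i+1},\ldots,V_m}-\brr{V_1,\ldots,V_i,aV_{i+1},\ldots,V_m}=-\brr{V_1,\ldots,V_i,[D,a],V_{i+1},\ldots,V_m}.
\]
This follows from \ref{commutation general} after using \ref{cyclicity general} to rotate the slot of interest to the boundary, applying the commutation relation there, and rotating back; by cyclicity the appended commutator $[D,a]$ reappears in the interior slot, giving the sign shown.

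Third, I would run the telescoping. Applying the local move to each left piece turns $[D,a_j]a_{j+1}$ into $a_{j+1}[D,a_{j+2}]$, which matches and cancels the right piece of the next term, plus an inserted term in which every argument now appears as a bare commutator, i.e.\ precisely $(-1)^{j+1}\phi_{n+1}(a_0,\ldots,a_{n+1})$. The two genuinely cyclic contributions are handled with the global relations instead: the $j=0$ term $\brr{a_0a_1[D,a_2],\ldots}$ cancels directly against the right piece of the $j=1$ term, while the final left piece (with $[D,a_n]a_{n+1}$ in the last slot) is treated with the full relation \ref{commutation general}, producing a wrap-around bracket that cancels the $(-1)^{n+1}$ term $\brr{a_{n+1}a_0[D,a_1],\ldots}$ together with one further copy of $\phi_{n+1}$. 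After all cancellations only copies of $\phi_{n+1}(a_0,\ldots,a_{n+1})$ survive, with total coefficient $\sum_{j=1}^n(-1)^{j+1}$, which equals $1$ for odd $n$ and $0$ for even $n$, as claimed.

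I expect the main obstacle to be the sign and boundary bookkeeping: verifying that each left/right piece cancels against exactly one neighbor and that the two cyclic terms (the $a_0a_1$ merge and the $a_{n+1}a_0$ wrap-around) are absorbed correctly, so that nothing but the $\phi_{n+1}$ terms remains. Once the local move is in hand the remaining computation is purely combinatorial, and checking the cases $n=1$ (which should give $b\phi_1=\phi_2$) and $n=2$ (which should give $b\phi_2=0$) by hand is a useful sanity test on the signs.
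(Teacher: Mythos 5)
Your proposal is correct and follows essentially the same route as the paper: expand $b\phi_n$, apply the Leibniz rule to each merged commutator, and use relation (II) (after cyclic rotation) to telescope the left/right pieces, leaving $\sum_{j=1}^{n}(-1)^{j+1}\phi_{n+1}=\phi_{n+1}$ or $0$ according to the parity of $n$. The paper itself only writes out the case $n=1$ and defers the general case to \cite[Lemma 17]{NS21}; your argument supplies exactly that general bookkeeping, and the local move, the boundary terms ($j=0$ and the wrap-around), and the signs all check out.
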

\begin{proof}
  We only consider the case $n=1$ while referring to \cite[Lemma 17]{NS21} for the proof of the general case. We combine the definition of the $b$-operator with Leibniz' rule for $[D,\cdot]$ to obtain: 
   \begin{align*}
     \int_{b \phi_1} a_0 d  a_1 d  a_2
     & = \brr{a_0 a_1 [D,a_2] } - \brr{a_0 [D,a_1 a_2]} + \brr{a_2 a_0 [D,a_1]}\\
           &=
         -  \brr{a_0 [D,a_1 ]a_2} + \brr{a_2 a_0 [D,a_1]} = \brr{a_0 [D,a_1],[D,a_2]}
   \end{align*}
   where we used (II) for the last equality.
  \end{proof}
\begin{lem}\label{lem:c}
Let $n$ be even. We have $bB_0\phi_n=2\phi_n-B_0\phi_{n+1}$.
\end{lem}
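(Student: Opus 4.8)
The plan is to evaluate $bB_0\phi_n$ on a tuple $(a_0,\ldots,a_n)$ and reduce everything to the two brackets $\phi_n(a_0,\ldots,a_n)=\brr{a_0[D,a_1],[D,a_2],\ldots,[D,a_n]}$ and $\brr{[D,a_0],[D,a_1],\ldots,[D,a_n]}$, the latter being precisely $B_0\phi_{n+1}(a_0,\ldots,a_n)$. First I would record that, since $1\cdot[D,a]=[D,a]$, the cochain $\psi:=B_0\phi_n$ of degree $n-1$ is given by $\psi(a_0,\ldots,a_{n-1})=\brr{[D,a_0],\ldots,[D,a_{n-1}]}$, and likewise $B_0\phi_{n+1}(a_0,\ldots,a_n)=\brr{[D,a_0],\ldots,[D,a_n]}$.

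Next I apply $b$ to $\psi$. Using the definition of $b$ on a degree-$(n-1)$ cochain together with the hypothesis that $n$ is even (so the wrap-around sign $(-1)^n$ equals $+1$), the quantity $bB_0\phi_n(a_0,\ldots,a_n)$ becomes an alternating sum of terms, each carrying exactly one merged commutator $[D,a_ja_{j+1}]$, plus a single wrap-around term carrying $[D,a_na_0]$. I then expand every merged commutator by Leibniz into a \emph{left half} (derivative on the first factor, e.g.\ $[D,a_j]a_{j+1}$) and a \emph{right half} (derivative on the second factor, e.g.\ $a_j[D,a_{j+1}]$).

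The engine of the argument is a single local commutation move: rotating the displaced letter $a$ to the seam via (I) and then applying (II) yields, for arbitrary bracket entries,
$$\brr{\ldots,Xa,Y,\ldots}-\brr{\ldots,X,aY,\ldots}=-\brr{\ldots,X,[D,a],Y,\ldots},$$
so the difference of two adjacent Leibniz halves re-inserts the commutator $[D,a]$ between $X$ and $Y$. Applied with $X=[D,a_j]$, $a=a_{j+1}$, $Y=[D,a_{j+2}]$, it collapses the left half of the $j$-th term together with the right half of the $(j+1)$-th term into $(-1)^{j+1}\brr{[D,a_0],\ldots,[D,a_n]}=(-1)^{j+1}B_0\phi_{n+1}$, for $j=0,\ldots,n-2$. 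The same move pairs the left half of the last ordinary term ($j=n-1$) with the right half of the wrap-around term, contributing one further $+B_0\phi_{n+1}$.

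It remains to collect signs and the two uncancelled halves. The right half of the $0$-th term is literally $\phi_n(a_0,\ldots,a_n)$, while the left half of the wrap-around term, $\brr{[D,a_n]a_0,[D,a_1],\ldots,[D,a_{n-1}]}$, equals $\phi_n(a_0,\ldots,a_n)-B_0\phi_{n+1}(a_0,\ldots,a_n)$ after one more application of (II) (moving $a_0$ once around the cycle). The alternating sum of the $n-1$ telescoped $B_0\phi_{n+1}$-contributions collapses to $-B_0\phi_{n+1}$, which cancels against the wrap-pairing contribution $+B_0\phi_{n+1}$; the two surviving $\phi_n$'s then combine with the $-B_0\phi_{n+1}$ left over from the wrap-around half to give exactly $2\phi_n-B_0\phi_{n+1}$. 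I expect the only real difficulty to be the bookkeeping of signs, especially the wrap-around term produced by the final summand of $b$; the parity hypothesis $n$ even is used crucially both for the sign of that wrap-around term and for the collapse of the alternating sum. Once the local move above is isolated, the remainder is careful but routine accounting.
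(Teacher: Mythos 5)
Your proof is correct and follows essentially the same strategy as the paper's: expand each merged commutator in $bB_0\phi_n$ by Leibniz and use properties (I) and (II) to telescope adjacent halves into copies of $B_0\phi_{n+1}$, leaving two copies of $\phi_n$. The paper only writes out the case $n=2$ (deferring general $n$ to [NS21, Lemma 17]), so your careful bookkeeping of the pairings, the wrap-around term, and the parity-dependent signs for general even $n$ is a welcome, fully worked-out version of the same argument.
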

\begin{proof}
Again we only consider the first case $n=2$ while referring to \cite[Lemma 17]{NS21} for the proof of the general case
\begin{align*}
&           \int_{b B_0 \phi_2} a_0 d  a_1 d  a_2  = \int_{B_0 \phi_2} a_0 a_1 d  a_2 - \int_{B_0 \phi_2} a_0 d (a_1  a_2) +\int_{B_0 \phi_2} a_2 a_0 d  a_1   \\
           &\qquad= \brr{[D,a_0a_1],[D,a_2]} - \brr{[D,a_0],[D,a_1a_2]} + \brr{[D,a_2 a_0],[D,a_1]}\\
           & \qquad = \cdots = 2 \brr{a_0[D,a_1],[D,a_2]} - \brr{[D,a_0],[D,a_1],[D,a_2]},
\end{align*}
combining Leibniz'  rule with (I) and (II). 
\end{proof}

Motivated by these results we define 
\begin{equation}
  \label{eq:psi}
  \psi_{2k-1}:=\phi_{2k-1}-\tfrac{1}{2}B_0\phi_{2k},
  \end{equation}
so that
$$B\psi_{2k+1}=2(2k+1)b\psi_{2k-1}.$$
We can rephrase this property in terms of the $(b,B)$-complex as follows. 
\begin{prop}
  \label{prop:bB}
  Let $\phi_n$ and $\psi_{2k-1}$ be as defined above and set
  $$\tilde{\psi}_{2k-1}:=(-1)^{k-1}\frac{(k-1)!}{(2k-1)!}\psi_{2k-1}\,.$$
  \begin{enumerate}[label=\textnormal{(\roman*)}]
  \item The sequence $(\phi_{2k})$ is a $(b,B)$-cocycle and each $\phi_{2k}$ defines an even Hochschild cocycle: $b \phi_{2k} = 0$. 
    \item The sequence $(\tilde \psi_{2k-1})$ is an odd $(b,B)$-cocycle. 
    \end{enumerate}
  \end{prop}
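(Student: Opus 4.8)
The plan is to verify, in each parity separately, the two defining conditions of a $(b,B)$-cocycle directly from the two structural lemmas, Lemma~\ref{lem:b} and Lemma~\ref{lem:c}, together with the elementary facts $B\phi_n = nB_0\phi_n$ for odd $n$, $B\phi_n=0$ for even $n$, $B=\mathbf{A}B_0$, and $B_0^2=0$ recorded in the text. Everything is formal once these are in place, except for the bottom-degree condition in the odd case, which is where I expect the only real difficulty to sit.

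For part (i) I would argue as follows. Lemma~\ref{lem:b} applied to the even index $n=2k$ gives $b\phi_{2k}=0$, which is precisely the statement that each $\phi_{2k}$ is an even Hochschild cocycle. Together with the observation made just before Lemma~\ref{lem:b} that $B\phi_n=0$ for even $n$, we obtain $b\phi_{2k}=0$ and $B\phi_{2k}=0$ simultaneously. Hence each individual $\phi_{2k}$ already forms its own $(b,B)$-cocycle $(0,\dots,0,\phi_{2k},0,\dots)$, and the relation $b\phi_{2k}+B\phi_{2k+2}=0$ for the sequence $(\phi_{2k})$ holds because both summands vanish. This settles (i) with no further computation.

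For part (ii) the heart of the matter is an intertwining identity between $b$ and $B$ on the $\psi$'s. First I would compute $b\psi_{2k-1}$: using $b\phi_{2k-1}=\phi_{2k}$ (Lemma~\ref{lem:b}, odd case) and $bB_0\phi_{2k}=2\phi_{2k}-B_0\phi_{2k+1}$ (Lemma~\ref{lem:c}), the definition $\psi_{2k-1}=\phi_{2k-1}-\tfrac12 B_0\phi_{2k}$ gives, after the $\phi_{2k}$ terms cancel,
\[
b\psi_{2k-1}=\tfrac12 B_0\phi_{2k+1}.
\]
Next I would compute $B\psi_{2k+1}$: since $B_0^2=0$ we have $BB_0\phi_{2k+2}=\mathbf{A}B_0^2\phi_{2k+2}=0$, so only the $\phi_{2k+1}$ term survives and $B\phi_{2k+1}=(2k+1)B_0\phi_{2k+1}$ yields
\[
B\psi_{2k+1}=(2k+1)B_0\phi_{2k+1}=2(2k+1)\,b\psi_{2k-1},
\]
which is exactly the relation quoted before the proposition.

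With this identity in hand, the cocycle condition $b\tilde\psi_{2k-1}+B\tilde\psi_{2k+1}=0$ becomes a linear recursion on the scalars $c_k:=(-1)^{k-1}(k-1)!/(2k-1)!$. Writing $\tilde\psi_{2k-1}=c_k\psi_{2k-1}$ and substituting, the condition reads $(c_k+2(2k+1)c_{k+1})\,b\psi_{2k-1}=0$, so I need $c_{k+1}=-c_k/(2(2k+1))$, which is immediately checked to be solved by the stated $c_k$ with $c_1=1$. The remaining, and I expect only subtle, point is the bottom-degree condition $B\tilde\psi_1=0$. Since $c_1=1$ and $BB_0\phi_2=0$, it reduces to $B\psi_1=B\phi_1=B_0\phi_1$, that is, to the vanishing of the $0$-cochain $a\mapsto\brr{[D,a]}$. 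This is the one identity I do not see how to extract from the two abstract properties (I) and (II) alone, so it is the main obstacle and the place where genuine input about the concrete bracket enters: for the spectral-action bracket \eqref{eq:br cycl} it holds by the tracial property, since $\Tr([D,a](z-D)^{-1})=0$ because $D$ commutes with $(z-D)^{-1}$. Once this is granted, both parts follow from the formal manipulations above; I would close by noting that each $\tilde\psi_{2k-1}$ is a legitimate normalized Hochschild cochain, as $B_0\phi_{2k}$ vanishes whenever an argument of index $\geq 1$ equals $1$, the corresponding $[D,\cdot]$ then being zero.
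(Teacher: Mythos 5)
Your argument is correct and is essentially the proof the paper intends: Proposition~\ref{prop:bB} is presented as a direct rephrasing of Lemmas~\ref{lem:b} and~\ref{lem:c} via the relation $B\psi_{2k+1}=2(2k+1)b\psi_{2k-1}$, and your derivation of that relation together with the scalar recursion for the normalization $(-1)^{k-1}(k-1)!/(2k-1)!$ is exactly that route. Your only substantive addition is the bottom-degree condition $B\tilde\psi_1=0$, i.e.\ $\brr{[D,a]}=0$, and you are right that this is not a formal consequence of properties (I) and (II) alone but must be supplied by the concrete bracket (for $\br{\cdot}$ it follows from traciality, since $D$ commutes with its own resolvent) --- a point the paper leaves implicit.
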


\subsection{The brackets as noncommutative integrals}

We will now describe how brackets $\brr{V,\ldots, V}$ can be written as noncommutative integrals of certain universal differential forms defined in terms of $A = \sum a_j db_j\in \Omega^1(\A)$, using only property (I) and (II).

At first order not much exciting happens and we simply have
$$
\brr{V} =\sum_j  \brr{a_j[D,b_j]}= \sum_j\int_{\phi_1}  a_j db_j = \int_{\phi_1} A.
$$
More interestingly, at second order we find using property (II) of the bracket that
\begin{align*}
  \brr{V,V } &=\sum_{j,k} \brr{ a_j[D,b_j] , a_k[D,b_k] } \\
  &=\sum_{j,k} \brr{ a_j[D,b_j] a_k,[D,b_k] } + \sum_{j,k} \brr{ a_j[D,b_j] ,[D,a_k],[D,b_k] }\\
  &= \int_{\phi_2} A^ 2 + \int_{\phi_3} A d A.
\end{align*}
Continuing like this, while only using property (II) of the bracket we find
\begin{align*}
	\brr{V,V,V}&=\int_{\phi_3}A^3+\int_{\phi_4}AdAA+\int_{\phi_5}AdAdA,\\
	\brr{V,V,V,V}&=\int_{\phi_4}A^4+\int_{\phi_5}(A^3dA+AdAA^2)+\int_{\phi_6}AdAdAA+\int_{\phi_7}AdAdAdA.
\end{align*}
This implies that, at least when the infinite sum on the left-hand side makes sense:
	\begin{align*}
	\sum_n \frac 1 n \brr{V,\ldots, V} =&\int_{\phi_1} A+\frac{1}{2}\int_{\phi_2}A^2+\int_{\phi_3}\Big(\frac{1}{2}AdA+\frac{1}{3}A^3\Big)\\
	&+\int_{\phi_4}\Big(\frac{1}{3}AdAA+\frac{1}{4}A^4\Big)+\ldots,
	\end{align*}
	where the dots indicate terms of degree 5 and higher. Using $\phi_{2k-1}=\psi_{2k-1}+\frac{1}{2}B_0\phi_{2k}$, this becomes
	\begin{align*}
		\sum_n \frac 1 n \brr{V,\ldots, V}=&\int_{\psi_1} A+\frac{1}{2}\int_{\phi_2}(A^2+dA)+\int_{\psi_3}\Big(\frac{1}{2}AdA+\frac{1}{3}A^3\Big)\\
		&+\frac{1}{4}\int_{\phi_4}\Big(dAdA+\frac{2}{3}(dAA^2+AdAA+A^2dA)+A^4\Big)+\ldots.
	\end{align*}
	Notice that, if $\phi_4$ would be tracial, we would be able to identify the terms $dAA^2$, $AdAA$ and $A^2dA$, and thus obtain the Yang--Mills form $F^2=(dA+A^2)^2$, under the fourth integral. In the general case, however, cyclic permutations under $\int_\phi$ produce correction terms, of which one needs to keep track. 
        Indeed, using \cite[Corollary 24]{NS21} we may re-order the integrands to yield
\begin{align*}
	\sum_n \frac 1 n \brr{V,\ldots, V}=&\int_{\psi_1}A+\int_{\phi_2}\tfrac{1}{2}(dA+A^2)+\int_{\psi_3}(\tfrac{1}{2}dAA+\tfrac{1}{3}A^3)+\tfrac{1}{4}\int_{\phi_4}(dA+A^2)^2\nonumber\\
	&+\int_{\psi_5}(\tfrac{1}{3}(dA)^2A+\tfrac{1}{2}dAA^3+\tfrac{1}{5}A^5)+\tfrac{1}{6}\int_{\phi_6}(dA+A^2)^3+\ldots
\end{align*}
where the dots indicate terms of degree 7 and higher. Writing $F=dA+A^2$ and $\cs_1(A):=A$, $\cs_3(A):=\tfrac12 dAA+\tfrac13 A^3$, etc., we can already discern our desired result in low orders.

\bigskip

As a preparation for the general result, we briefly recall from \cite{Qui90} the definition of Chern--Simons forms of arbitrary degree. 

         \begin{defi}
           \label{defi:cs}
           The (universal) \textbf{Chern--Simons form} of degree $2k-1$ is given for $A \in \Omega^1(\A)$ by 
           \begin{equation}\label{eq:cs}
\cs_{2k-1}(A) := \int_0^1 A (F_t)^{k-1} \,dt,
           \end{equation}
           where $F_t = t dA + t^ 2 A^2$ is the curvature two-form of the (connection) one-form $A_t = t A$.
           \end{defi}

         \begin{exam}
           For the first three Chern--Simons forms one easily derives the following explicit expressions:
           \begin{gather*}
             \cs_1(A) = A; \qquad  \cs_3(A) = \frac 12 \left( A dA + \frac 2 3 A^3 \right);\\
             \cs_5(A) = \frac 13 \left( A (dA)^2 + \frac 3 4 A dA A^ 2 + \frac 3 4 A^3 dA + \frac 3 5 A^5 \right).
             \end{gather*}
           \end{exam}

\subsection{Cyclic cocycles in the Taylor expansion of the spectral action }
We now apply the above results to the brackets appearing in the Taylor expansion of the spectral action:
$$
\Tr(f(D+V)-f(D)) = \sum_{n=1}^\infty \frac 1 n \br{V, \cdots, V}.
$$
In order to control the full Taylor expansion of the spectral action we naturally need a growth condition on the derivatives of the function $f$,  and this is accomplished by considering the class $\E_s^\gamma$ defined in \eqref{eq:function class}.
The following result is \cite[Theorem 27]{NS21}.
        
\begin{thm}\label{thm:main thm}
  Let $(\A,\H,D)$ be an $s$-summable spectral triple, and let $f\in\E_s^{\gamma}$ for $\gamma\in(0,1)$. The spectral action fluctuated by $V=\pi_D(A)\in\Omega_D^1(\A)_\sa$ 
  can be written as
 \begin{align*}
  \Tr(f(D+V)-f(D)) = \sum_{k=1}^\infty \left( \int_{\psi_{2k-1}}  \cs_{2k-1} (A) +\frac 1 {2k} \int_{\phi_{2k}}  F^{k} \right),
  \end{align*}
  where the series converges absolutely.
\end{thm}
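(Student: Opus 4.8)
The plan is to start from the absolutely convergent Taylor expansion \eqref{eq:spectraction brackets},
$$
\Tr(f(D+V)-f(D))=\sum_{n=1}^\infty\frac1n\br{V,\ldots,V},\qquad V=\sum_j a_j[D,b_j],
$$
to rewrite each bracket as a finite sum of noncommutative integrals against the Hochschild cochains $\phi_m$, and then to \emph{regrade} the resulting double series from the Taylor order $n$ to the form degree, collecting the odd degrees into Chern--Simons terms and the even degrees into Yang--Mills terms. The absolute convergence of the right-hand side for $f\in\E_s^\gamma$ on an $s$-summable triple is the analytic input, taken from \cite{NS21}.

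The first step is the general expansion of the $n$-th bracket. Inserting $V=\sum_j a_j[D,b_j]$ and iterating property~(II) of Lemma~\ref{cycl bracket} together with the Leibniz rule for $[D,\cdot]$ — exactly as in the order-two and order-three computations displayed above — one shows
$$
\br{V,\ldots,V}=\sum_{m=n}^{2n-1}\int_{\phi_m}\omega_{n,m}(A),
$$
where $\omega_{n,m}(A)$ is an explicit universal $m$-form built from $m-n$ factors $dA$ and $2n-m$ factors $A$. This is the combinatorial core of the rewriting, in which the $n$-slot bracket is promoted to the higher cochains $\phi_m$; it is \cite[Lemma~17 and Corollary~24]{NS21}.

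Next I would regrade. Writing $\phi_{2k-1}=\psi_{2k-1}+\tfrac12 B_0\phi_{2k}$ as in \eqref{eq:psi} and using $\int_{B_0\phi}a_0\,da_1\cdots=\int_\phi da_0\,da_1\cdots$, each odd cochain splits into a genuinely odd piece paired with $\psi_{2k-1}$ and a piece that, after $B_0$, contributes a degree-$2k$ form against $\phi_{2k}$. Collecting all contributions of a fixed form degree, I would establish two closed identities: the even degree-$2k$ terms sum to $\tfrac1{2k}\int_{\phi_{2k}}F^k$ with $F=dA+A^2$, and the odd degree-$(2k-1)$ terms sum to $\int_{\psi_{2k-1}}\cs_{2k-1}(A)$. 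In the even case the decisive fact is that $\phi_{2k}$ is simultaneously a Hochschild and a cyclic cocycle, $b\phi_{2k}=B\phi_{2k}=0$ (Proposition~\ref{prop:bB}); this is precisely what licenses the cyclic reorderings reassembling the scattered words $dA$ and $A^2$ into the single curvature power $(dA+A^2)^k$, controlling the correction terms flagged after the low-order computation. For the odd case the clean organizing principle is the scaling $A\mapsto tA$: since $\br{tV,\ldots,tV}=t^n\br{V,\ldots,V}$, the Taylor coefficient $\tfrac1n$ becomes $\int_0^1 t^{n-1}\,dt$, and under this substitution the regraded degree-$(2k-1)$ integrand organizes into $A\,F_t^{k-1}$ with $F_t=t\,dA+t^2A^2$, reproducing the Chern--Simons form $\cs_{2k-1}(A)=\int_0^1 A\,F_t^{k-1}\,dt$ of Definition~\ref{defi:cs}.

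The final step is to justify the regrading. Reindexing a double series from the $n$-grading to the degree-grading is legitimate only under absolute convergence of $\sum_{n,m}\big|\int_{\phi_m}\omega_{n,m}(A)\big|$, which again is the estimate of \cite{NS21}, resting on the multiple-operator-integral bounds for operators with $s$-summable resolvent and on the factorial growth built into the class $\E_s^\gamma$. I expect the main obstacle to be the even-degree identity: verifying that the many cyclic-reordering corrections generated under $\int_{\phi_{2k}}$ cancel so as to yield precisely $(dA+A^2)^k$ with the single coefficient $\tfrac1{2k}$, and, in parallel, that the odd integrand collapses to $A\,F_t^{k-1}$. Once these curvature identities are in place, absolute convergence closes the argument.
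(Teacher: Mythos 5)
Your overall strategy coincides with the paper's (which in turn defers the full argument to \cite[Theorem 27]{NS21}): Taylor-expand, convert each bracket $\br{V,\ldots,V}$ into a finite sum $\sum_{m}\int_{\phi_m}\omega_{n,m}(A)$ by iterating property (II), regrade the double series by form degree, split $\phi_{2k-1}=\psi_{2k-1}+\tfrac12 B_0\phi_{2k}$, and invoke the absolute-convergence estimates for $f\in\E_s^\gamma$ on an $s$-summable triple to justify the resummation. The observation that the Taylor coefficient $\tfrac1n$ is $\int_0^1 t^{n-1}\,dt$, so that the odd-degree terms organize into $\int_0^1 A\,F_t^{k-1}\,dt=\cs_{2k-1}(A)$, is also the right bookkeeping for the Chern--Simons coefficients.

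There is, however, one concrete misstep at the most delicate point. You assert that $b\phi_{2k}=B\phi_{2k}=0$ is ``precisely what licenses the cyclic reorderings'' reassembling $dA\,A^2$, $A\,dA\,A$, $A^2dA$ into $(dA+A^2)^k$. That is not so: $B\phi_{2k}=0$ holds automatically because $B_0\phi_{2k}$ is invariant under cyclic permutations and the antisymmetrization $\mathbf{A}$ kills it in even degree; it does not make $\phi_{2k}$ tracial, and the paper explicitly warns that $\phi_4$ is not tracial in general. The actual mechanism is property (II) of the bracket: each cyclic transposition under $\int_{\phi_{2k}}$ produces a correction term carrying an extra $[D,\cdot]$, i.e.\ a form of one degree higher integrated against $\phi_{2k+1}$. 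These corrections do not cancel --- they propagate upward in degree and must be absorbed into the Chern--Simons and Yang--Mills integrands at degrees $2k+1$ and beyond; this bookkeeping is exactly the content of \cite[Corollary 24]{NS21}, which you cite but whose role you mischaracterize. Your closing expectation that the reordering corrections ``cancel so as to yield precisely $(dA+A^2)^k$'' is therefore the wrong picture: without tracking where the corrections land, neither the coefficient $\tfrac1{2k}$ nor the closed form of $\cs_{2k-1}(A)$ can be verified.
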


Under less restrictive conditions on the function $f$ we also have the following asymptotic version of this result \cite[Proposition 28]{NS21} 

\begin{thm}\label{thm:asymptotic expansion}
	Let $(\A,\H,D)$ be a spectral triple, and let $\brr{\cdot}$ satisfy \ref{cyclicity} and \ref{commutation}, with associated cyclic cocycles $\phi$ and $\tilde\psi$. For $A\in\Omega^1(\A)$ and $V=\pi_D(A)$, we asymptotically have
	$$\sum_n \frac 1 n \brr{V,\ldots, V}\sim\sum_{k=1}^\infty\bigg(\int_{\psi_{2k-1}}\cs_{2k-1}(A)+\frac{1}{2k}\int_{\phi_{2k}} F^{k}\bigg),$$
	by which we mean that, for every $K\in\N$, there exist forms $\omega_l\in\Omega^l(\A)$ for $l=K+1,\ldots,2K+1$ such that
	\begin{align*}
		\sum_{n=1}^K\frac{1}{n}\!\brr{V,\ldots,V}&-\sum_{k=1}^K\left(\int_{\psi_{2k-1}}\cs_{2k-1}(A)+\frac{1}{2k}\int_{\phi_{2k}}F^k\right)
		=\sum_{l=K+1}^{2K+1}\int_{\phi_l}\omega_l.
	\end{align*}
\end{thm}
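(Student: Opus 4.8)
The plan is to treat the statement as a purely algebraic, order-by-order identity. Since the brackets $\brr{\cdot}$ are only assumed to satisfy \ref{cyclicity} and \ref{commutation}, no analytic input is needed and, in contrast to Theorem \ref{thm:main thm}, there is no issue of convergence of the full series: everything reduces to matching the two sides degree by degree in the universal differential forms. First I would record the single-bracket expansion. Expanding $\brr{V,\ldots,V}$ with $n$ copies of $V=\sum_j a_j[D,b_j]$ and repeatedly applying \ref{commutation} to move algebra elements past the commutators $[D,\cdot]$ --- exactly as in the worked cases $n\le 4$ --- yields a finite sum
\[
\brr{V,\ldots,V}=\sum_{l=n}^{2n-1}\int_{\phi_l}\omega^{(n)}_l,\qquad \omega^{(n)}_l\in\Omega^l(\A),
\]
against the Hochschild cochains $\phi_l$ of the matching degree; the range $n\le l\le 2n-1$ reflects that the leading element is never differentiated while each of the remaining $2n-1$ factors contributes one generator. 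Summing over $1\le n\le K$ and collecting by form degree, the partial sum is a sum of integrals $\int_{\phi_l}(\cdots)$ with $1\le l\le 2K-1$, where for fixed $l$ the contributing indices are exactly $\lceil(l+1)/2\rceil\le n\le l$.

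The heart of the argument is the degree-wise matching with the right-hand side. Writing $\phi_{2k-1}=\psi_{2k-1}+\tfrac12 B_0\phi_{2k}$ and using $\int_{B_0\phi}a_0da_1\cdots=\int_\phi da_0da_1\cdots$, the $k$-th summand of the right-hand side contributes only in degrees $2k-1$ and $2k$, namely $\int_{\phi_{2k-1}}\cs_{2k-1}(A)$ in degree $2k-1$ and $\int_{\phi_{2k}}\bigl(-\tfrac12 d\,\cs_{2k-1}(A)+\tfrac1{2k}F^k\bigr)$ in degree $2k$, both paired with $\phi$ of the matching degree. One then checks, using Lemmas \ref{lem:b} and \ref{lem:c} together with the re-ordering of integrands from \cite[Corollary 24]{NS21}, that for every $l$ the total degree-$l$ coefficient of $\sum_n\frac1n\brr{V,\ldots,V}$ equals this right-hand coefficient; the low-order computations preceding Definition \ref{defi:cs} are precisely the first instances. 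I expect this reorganization to be the main obstacle: at even degree $2k$ the brackets produce all the words $dA\,A^2$, $A\,dA\,A$, $A^2 dA,\dots$ which combine into $F^k=(dA+A^2)^k$ only after the non-tracial correction terms coming from cyclic permutations under $\int_{\phi_{2k}}$ are accounted for, and keeping track of these corrections is the genuinely intricate step.

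Granting the degree-wise identity, the theorem is bookkeeping. Form the difference $D_K:=\sum_{n=1}^K\frac1n\brr{V,\ldots,V}-\sum_{k=1}^K\bigl(\int_{\psi_{2k-1}}\cs_{2k-1}(A)+\tfrac1{2k}\int_{\phi_{2k}}F^k\bigr)$, which is again a sum of integrals against the $\phi_l$. For $l\le K$ every index contributing to degree $l$ is already present on both sides, so by the matching identity these parts cancel and $D_K$ has no degree-$l$ component. For $K<l$ the remainder is explicit: either the left-hand side is missing the terms with $K<n\le l$, giving $\omega_l=-\sum_{n=K+1}^{l}\frac1n\omega^{(n)}_l$ for $K+1\le l\le 2K-1$, or, in degree $2K$, the left-hand side contributes nothing while the right-hand side does, giving $\omega_{2K}=\tfrac12 d\,\cs_{2K-1}(A)-\tfrac1{2K}F^K$. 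Since the left-hand partial sum reaches only degree $2K-1$ and the right-hand one only degree $2K$, the remainder is supported in degrees $K+1\le l\le 2K$, so setting $\omega_{2K+1}:=0$ realizes it in the stated range $l=K+1,\ldots,2K+1$ and completes the proof.
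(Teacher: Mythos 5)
Your proposal is correct and follows essentially the same route as the paper: expand each bracket via property (II) into forms of degrees $n$ through $2n-1$, substitute $\phi_{2k-1}=\psi_{2k-1}+\tfrac12 B_0\phi_{2k}$, invoke the re-ordering identity of \cite[Corollary 24]{NS21} for the degree-wise matching, and then do the truncation bookkeeping — which is precisely how the paper (deferring to \cite[Proposition 28]{NS21}) argues, and your identification of the surviving degrees $K+1\le l\le 2K$ with $\omega_{2K+1}=0$ is exactly right. The only caveat is that the degree-wise identity you label as "one then checks" is the genuine content of the cited Corollary 24 and is asserted rather than proven, but the paper treats it the same way.
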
	
	In particular, by taking $\brr{\cdot}=\br{\cdot}$, we obtain the following corollary.
\begin{cor}	
	For $f\in C^\infty$, and $V=\pi_D(A)\in\Omega^1_D(\A)_\sa$ such that the Taylor expansion of the spectral action converges, we asymptotically have
	\begin{align*}
		\Tr(f(D+V)-f(D))&=\sum_{n=1}^\infty \frac{1}{n!}\frac{d^n}{dt^n}\Tr(f(D+tV))\Big|_{t=0}\\
		&\sim\sum_{k=1}^\infty\bigg(\int_{\psi_{2k-1}}\cs_{2k-1}(A)+\frac{1}{2k}\int_{\phi_{2k}} F^{k}\bigg).
	\end{align*}	
\end{cor}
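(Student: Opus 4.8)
The plan is to recognize the corollary as nothing more than the specialization of Theorem~\ref{thm:asymptotic expansion} to the concrete bracket $\br{\cdot}$ attached to the spectral action, so that no genuinely new estimate is required. First I would dispose of the \emph{exact} equality on the first line. By the very definition of the bracket recorded in~\eqref{eq:spectraction brackets}, one has $\frac{1}{n!}\frac{d^n}{dt^n}\Tr(f(D+tV))\big|_{t=0}=\frac{1}{n}\br{V,\ldots,V}$ for each $n$ (since $\br{V,\ldots,V}$ is $1/(n-1)!$ times the $n$-th derivative), so under the standing hypothesis that the Taylor expansion converges the first line is precisely $\Tr(f(D+V)-f(D))=\sum_{n}\frac1n\br{V,\ldots,V}$, which is~\eqref{eq:spectraction brackets} itself.

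Next I would check the hypotheses of Theorem~\ref{thm:asymptotic expansion}. That theorem applies to any family $\brr{\cdot}$ obeying cyclicity~\ref{cyclicity} and the commutation rule~\ref{commutation}; by Lemma~\ref{cycl bracket} the concrete bracket $\br{\cdot}$ satisfies exactly these two relations. Hence I would invoke the theorem verbatim with $\brr{\cdot}=\br{\cdot}$, noting that the cocycles $\phi_{2k}$ and $\tilde\psi_{2k-1}$ appearing in its conclusion are exactly those constructed from $\br{\cdot}$ through~\eqref{eq:def phi_n} and~\eqref{eq:psi}. This delivers the asymptotic relation $\sum_n\frac1n\br{V,\ldots,V}\sim\sum_k\big(\int_{\psi_{2k-1}}\cs_{2k-1}(A)+\frac{1}{2k}\int_{\phi_{2k}}F^{k}\big)$, i.e.\ the second line. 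Chaining the exact equality with this asymptotic relation yields the statement.

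The hard part will not be any computation but rather keeping the two notions of ``expansion'' cleanly separated. The convergence hypothesis is needed only for the first, exact equality; it is irrelevant to the asymptotic relation, because by the definition of $\sim$ in Theorem~\ref{thm:asymptotic expansion} that relation is a statement about each finite partial sum together with an explicit remainder $\sum_{l=K+1}^{2K+1}\int_{\phi_l}\omega_l$, and involves no convergence of the right-hand Chern--Simons--Yang--Mills series. Thus I would be careful to state that the corollary does \emph{not} assert convergence of $\sum_k(\int_{\psi_{2k-1}}\cs_{2k-1}(A)+\frac1{2k}\int_{\phi_{2k}}F^k)$; it asserts only that this series and the convergent left-hand series agree at the level of partial sums up to the controlled remainder. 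Once this bookkeeping is in place the corollary follows immediately.
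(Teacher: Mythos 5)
Your proposal is correct and follows exactly the paper's route: the paper derives this corollary simply by specializing Theorem~\ref{thm:asymptotic expansion} to $\brr{\cdot}=\br{\cdot}$ (whose hypotheses \ref{cyclicity} and \ref{commutation} are guaranteed by Lemma~\ref{cycl bracket}), with the first line being the Taylor expansion~\eqref{eq:spectraction brackets}. Your additional bookkeeping about which statement requires the convergence hypothesis and which is purely asymptotic is accurate and, if anything, more explicit than the paper.
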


\subsection{Gauge invariance and the pairing with K-theory}\label{sct:vanishing pairing}

 Since the spectral action is a spectral invariant, it is in particular invariant under conjugation of $D$ by a unitary $U\in \A$. More generally, in the presence of an inner fluctuation we find that the spectral action is invariant under the transformation
$$
D+V \mapsto U (D+V) U^* = D + V^U; \qquad V^U = U[D,U^*] + U V U^*.
$$
This transformation also holds at the level of the universal forms, with a gauge transformation of the form $A \mapsto A^U = U d U^* + U A U ^*$. Let us analyze the behavior of the Chern--Simons and Yang--Mills terms appearing in Theorem \ref{thm:main thm} under this gauge transformation, and derive an interesting consequence for the pairing between the odd $(b,B)$-cocycle $\tilde \psi$ with the odd K-theory group of $\A$. As an easy consequence of the fact that $\phi_{2k}$ is a Hochschild cocycle, we have 
\begin{lem}
  The Yang--Mills terms $\int_{\phi_{2k}} F^k$ with $F = dA +A^2$ are invariant under the gauge transformation $A \mapsto A^U$ for every $k \geq 1$. 
  \end{lem}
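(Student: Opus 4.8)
The plan is to prove covariance of the curvature, $F^U = U F U^*$, and then use the Hochschild cocycle property of $\phi_{2k}$ to show that conjugating the integrand by $U$ leaves the noncommutative integral unchanged. Since the gauge-transformed field is $A^U = U dU^* + U A U^*$, with $U,U^* \in \A$ (so that $dU^* \in \Omega^1(\A)$), I would first compute $F^U := dA^U + (A^U)^2$ directly in the universal differential algebra $\Omega^\bullet(\A)$. The structural inputs are the graded Leibniz rule, $d^2=0$, and the identities $d(UU^*)=0=d(U^*U)$, which give $dU^*\,U = -U^*\,dU$ and let one simplify the mixed terms. Expanding $(A^U)^2$ and $dA^U$ and keeping careful track of the signs on the degree-one pieces, all terms containing $dU$ or $dU^*$ cancel pairwise, leaving
$$
F^U = U\,dA\,U^* + U A^2 U^* = U(dA+A^2)U^* = U F U^*.
$$
Raising to the $k$-th power and using $U^* U = 1$ between consecutive factors then gives $(F^U)^k = U F^k U^*$.

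Next, because $\phi_{2k}$ is a Hochschild cocycle (Proposition~\ref{prop:bB}(i)), the associated noncommutative integral $\int_{\phi_{2k}}$ is a graded trace of degree $2k$ on $\Omega^\bullet(\A)$; in particular, for a degree-zero element $c\in\A$ and any $\omega\in\Omega^{2k}(\A)$ one has $\int_{\phi_{2k}}\omega\, c = \int_{\phi_{2k}} c\,\omega$ (the sign $(-1)^{2k\cdot 0}=1$). Applying this with $\omega = U F^k$ and $c = U^*$ yields
$$
\int_{\phi_{2k}} (F^U)^k = \int_{\phi_{2k}} U F^k U^* = \int_{\phi_{2k}} U^* U F^k = \int_{\phi_{2k}} F^k,
$$
which is exactly the claimed invariance for every $k\geq 1$.

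The main obstacle is justifying the trace-like step: that $b\phi_{2k}=0$ is precisely what permits moving the degree-zero factor $U^*$ from the right end of the integrand to the left. In integral notation this means rewriting $U F^k U^*$ by repeatedly using $d(U^*U)=0$ and the Leibniz rule to absorb the trailing $U^*$; the resulting correction terms reorganize into (a sign times) the Hochschild coboundary $b\phi_{2k}$ evaluated on the factors of $F^k$, and hence vanish. By contrast, the covariance $F^U = U F U^*$ is a routine, if sign-sensitive, computation, so the conceptual content lies entirely in identifying the vanishing of $b\phi_{2k}$ with the graded-trace property of $\int_{\phi_{2k}}$.
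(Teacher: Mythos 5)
Your proof is correct and is precisely the argument the paper has in mind: the paper states the lemma as ``an easy consequence of the fact that $\phi_{2k}$ is a Hochschild cocycle,'' which is exactly your combination of the covariance $F^U = U F U^*$ with the trace property $\int_{\phi_{2k}} \omega\, c = \int_{\phi_{2k}} c\,\omega$ that $b\phi_{2k}=0$ encodes. One minor caveat: $b\phi_{2k}=0$ by itself only gives this trace property against degree-zero elements $c\in\A$ (the full graded trace property on $\Omega^\bullet(\A)$ would additionally require cyclicity of $\phi_{2k}$), but the degree-zero case is the only one you actually use, so the argument stands as written.
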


We are thus led to the conclusion that the  sum of Chern--Simons forms  is gauge invariant as well. Indeed, arguing as in \cite{CC06}, since both $\Tr (f(D+V))$ and the Yang--Mills terms are invariant under $V \mapsto V^U$, we find that, under the assumptions stated in Theorem \ref{thm:main thm}:
$$
\sum_{k=0}^\infty \int_{\psi_{2k+1}} \cs_{2k+1}(A^U ) =  
\sum_{k=0}^\infty \int_{\psi_{2k+1}} \cs_{2k+1} (A ).
$$
Each individual Chern--Simons form behaves non-trivially under a gauge transformation. Nevertheless, it turns out that we can conclude, just as in \cite{CC06}, that the pairing of the whole $(b,B)$-cocycle with K-theory is trivial. 
Since the $(b,B)$-cocycle $\tilde \psi$ is given as an infinite sequence, we should first carefully study the analytical behavior of $\tilde \psi$. In fact, we should show that it is an \textit{entire cyclic cocycle} in the sense of \cite{C88a} (see also \cite[Section IV.7.$\alpha$]{C94}). 
It turns out \cite[Lemma 36]{NS21} that our assumptions on the growth of the derivatives of $f$ ensure that the brackets define entire cyclic cocycles. 
\begin{lem}
	Fix $f\in\Esg$ for $\gamma<1$ and equip $\A$ with the norm $\| a \|_1 = \| a \| + \| [D,a]\|$. Then, for any bounded subset $\Sigma\subset\mathcal{A}$ there exists $C_\Sigma$ such that
		$$\left |\tilde{\psi}_{2k+1}(a_0,\ldots, a_{2k+1}) \right|\leq \frac{C_\Sigma}{k!},$$
	for all $a_j\in\Sigma$. Hence, $\phi$ and $\tilde\psi$ are entire cyclic cocycles.
\end{lem}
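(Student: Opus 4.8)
The plan is to reduce $\tilde\psi_{2k+1}$ to the concrete spectral-action brackets $\br{\cdot}$ (for $f\in\Esg$ we have $\brr{\cdot}=\br{\cdot}$), to estimate those brackets by the multiple operator integral bounds of \cite{NS21}, and then to harvest the factorial decay from the explicit prefactor in $\tilde\psi$. First I would unwind the definitions. Since $\tilde{\psi}_{2k+1}=(-1)^{k}\tfrac{k!}{(2k+1)!}\psi_{2k+1}$ and $\psi_{2k+1}=\phi_{2k+1}-\tfrac12 B_0\phi_{2k+2}$, evaluation on $a_0,\ldots,a_{2k+1}\in\Sigma$ gives
\begin{align*}
\tilde\psi_{2k+1}(a_0,\ldots,a_{2k+1})&=(-1)^{k}\frac{k!}{(2k+1)!}\big(\br{a_0[D,a_1],[D,a_2],\ldots,[D,a_{2k+1}]}\\
&\qquad-\tfrac12\br{[D,a_0],[D,a_1],\ldots,[D,a_{2k+1}]}\big),
\end{align*}
so the problem reduces to bounding one bracket of $2k+1$ entries and one of $2k+2$ entries, each assembled from $a_0$ and the commutators $[D,a_j]$.

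The analytic core is the estimate on the bracket itself. Here I would invoke the machinery of \cite{NS21}: writing $\br{V_1,\ldots,V_n}$ as $\tfrac1n$ times a multiple operator integral built from the $(n-1)$-st divided difference of $f'$, and controlling that integral for an $s$-summable triple by the corresponding divided-difference norm, one arrives at a bound of the form
$$\left|\br{V_1,\ldots,V_n}\right|\le C_f^{\,n+1}\,n!^{\gamma-1}\prod_{j=1}^n\|V_j\|.$$
The decisive mechanism is that the factor $1/n!$ --- coming from the $\tfrac1n$ prefactor together with the simplex volume $1/(n-1)!$ in the divided-difference representation --- defeats the $n!^{\gamma}$ growth of $\|\widehat{f^{(n)}}\|_1$ encoded in $\Esg$, leaving the net subfactorial weight $n!^{\gamma-1}$ with $\gamma<1$. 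I expect this to be the main obstacle: bounding the multiple operator integral uniformly in $n$ and distributing the trace-class part of the resolvent correctly (the finitely many orders $n$ below the summability being absorbed into the constant) is exactly the technical content of \cite{NS21}; everything after it is bookkeeping.

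With this estimate available I would insert the norm bounds. As $\|[D,a_j]\|\le\|a_j\|_1$ and $\|a_0\|\le\|a_0\|_1$, and $\Sigma$ is bounded for $\|\cdot\|_1$, there is $M_\Sigma$ with $\|a_j\|_1\le M_\Sigma$ on $\Sigma$; each of the two brackets above is then bounded by $C_f^{\,2k+2}(2k)!^{\gamma-1}M_\Sigma^{2k+2}$, where I have replaced $(2k+1)!$ and $(2k+2)!$ by the smaller-exponent quantity $(2k)!$ (legitimate since $\gamma-1<0$). Collecting the constants into $c_\Sigma$ and $\Lambda_\Sigma$ yields $|\psi_{2k+1}(a_0,\ldots,a_{2k+1})|\le c_\Sigma\,\Lambda_\Sigma^{\,k}\,(2k)!^{\gamma-1}$.

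Finally I would restore the prefactor and run the factorial asymptotics. Multiplying by $\tfrac{k!}{(2k+1)!}$ gives
$$\left|\tilde\psi_{2k+1}(a_0,\ldots,a_{2k+1})\right|\le c_\Sigma\,\Lambda_\Sigma^{\,k}\,\frac{k!}{2k+1}\,(2k)!^{\gamma-2},$$
and Stirling shows that $(k!)^2(2k)!^{\gamma-2}\to0$ super-exponentially --- the leading term of its logarithm is $2(\gamma-1)k\log k\to-\infty$ since $\gamma<1$ --- so even after multiplying by $\Lambda_\Sigma^{\,k}$ the product $k!\cdot|\tilde\psi_{2k+1}|$ stays bounded by some $C_\Sigma$. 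This is the claimed bound $|\tilde\psi_{2k+1}(a_0,\ldots,a_{2k+1})|\le C_\Sigma/k!$. Applying the same bracket estimate directly to $\phi_{2k}(a_0,\ldots,a_{2k})=\br{a_0[D,a_1],\ldots,[D,a_{2k}]}$ gives $|\phi_{2k}|\le C_f^{\,2k+1}(2k)!^{\gamma-1}M_\Sigma^{2k+1}$, whose factorial decay likewise meets Connes' growth condition \cite{C88a}; hence both $\phi$ and $\tilde\psi$ are entire, and that they are $(b,B)$-cocycles is Proposition \ref{prop:bB}.
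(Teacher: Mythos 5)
Your proof is correct and follows essentially the same route as the paper's (the paper itself defers the proof to \cite[Lemma 36]{NS21}): reduce $\tilde\psi_{2k+1}$ to two concrete brackets via $\psi_{2k+1}=\phi_{2k+1}-\tfrac12 B_0\phi_{2k+2}$, invoke the multiple-operator-integral estimate $|\br{V_1,\ldots,V_n}|\lesssim C_f^{n+1}\,n!^{\gamma-1}\prod_j\|V_j\|$ coming from the $\Esg$ growth condition, and let the prefactor $k!/(2k+1)!$ together with Stirling produce the $1/k!$; your asymptotic bookkeeping (leading log term $2(\gamma-1)k\log k$) is right. One caveat on your final sentence: the bound $\|\phi_{2k}\|_\Sigma\lesssim C^{2k}(2k)!^{\gamma-1}$ beats $C_\Sigma/k!$ only when $\gamma\le\tfrac12$ (the relevant logarithm is $(2\gamma-1)k\log k$), so for $\gamma\in(\tfrac12,1)$ the entireness of $\phi$ does not follow from this estimate alone and needs a separate justification (for instance that each $\phi_{2k}$ is by itself a finitely supported cocycle); the displayed estimate for $\tilde\psi$ --- the part actually used for the K-theory pairing --- is unaffected, since there the combinatorial prefactor $k!/(2k+1)!\sim (4^k k!)^{-1}$ already supplies the factorial decay on its own.
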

We thus have the following interesting consequence of Theorem \ref {thm:main thm}.

\begin{thm}
  Let $f\in\Esg$ for $\gamma<1$. Then the pairing of the odd entire cyclic cocycle $\tilde \psi$ with $K_1(\A)$ is trivial, \textit{i.e.}
  $$\langle U,\tilde\psi\rangle=(2\pi i)^{-1/2}\sum_{k=0}^\infty (-1)^{k}k!\tilde{\psi}_{2k+1}(U^*,U,\ldots,U^*,U)=0
  $$
  for all unitary $U\in \A$.
\end{thm}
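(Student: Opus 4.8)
The plan is to exploit the gauge invariance of the full Chern--Simons sum, established just above, and combine it with a homotopy/scaling argument for the pairing. First I would recall that for a unitary $U \in \A$, the associated gauge potential is $A^U = U dU^*$ (taking $A=0$ as the base point), and that the pairing $\langle U, \tilde\psi\rangle$ is precisely the quantity $\sum_{k=0}^\infty \int_{\psi_{2k+1}} \cs_{2k+1}(A^U)$ up to the normalization constants $(-1)^k k!/(2k+1)!$ absorbed into the definition of $\tilde\psi$. The entire growth bound from the preceding lemma guarantees that this infinite sum converges absolutely for $a_j$ in the bounded set $\{U^*, U\}$, so all manipulations below are analytically justified and rearrangements of the series are permitted.

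Next I would introduce the one-parameter family $A_u = u\, U dU^*$ for $u \in [0,1]$, interpolating between the trivial potential and $A^U$, and study $g(u) := \sum_{k=0}^\infty \int_{\psi_{2k+1}} \cs_{2k+1}(A_u)$. The gauge-invariance identity displayed above says that $\sum_k \int_{\psi_{2k+1}} \cs_{2k+1}(A^U) = \sum_k \int_{\psi_{2k+1}} \cs_{2k+1}(0) = 0$, since every Chern--Simons form $\cs_{2k+1}$ vanishes at $A=0$; this already gives $g(1) = 0$. The subtlety is that the gauge-invariance statement, as proved, compares the total sum at $A^U$ with the total sum at the \emph{reference} $A$, and one must check that applying it with base point $A=0$ is legitimate: the transformation law $A \mapsto A^U = U dU^* + UAU^*$ sends $A=0$ to $U dU^*$, and invariance of $\Tr(f(D+V))$ together with invariance of each Yang--Mills term $\int_{\phi_{2k}}F^k$ forces the Chern--Simons sum to be unchanged, hence equal to its value at $A=0$, which is zero.

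The cleanest route, however, is to bypass the interpolation and argue directly that $g(1)=0$ expresses exactly the desired vanishing $\langle U,\tilde\psi\rangle = 0$, once one matches normalizations. So the key steps are: (i) identify $\langle U, \tilde\psi\rangle$ with the gauge-transformed Chern--Simons sum evaluated on $A^U = UdU^*$, reading off the constant $(2\pi i)^{-1/2}$ and the signs $(-1)^k k!$ from the definition $\tilde\psi_{2k-1} = (-1)^{k-1}\frac{(k-1)!}{(2k-1)!}\psi_{2k-1}$ and the explicit form of $\cs_{2k+1}$ evaluated at a pure gauge $UdU^*$; (ii) invoke the gauge invariance of the total Chern--Simons sum with reference potential $A=0$; and (iii) use $\cs_{2k+1}(0)=0$ termwise to conclude the sum vanishes.

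The main obstacle I expect is step (i): the bookkeeping needed to show that the abstract gauge-invariance relation, which was derived as a statement about the \emph{difference} of spectral actions being gauge invariant, translates into the specific pairing formula $\langle U, \tilde\psi\rangle$ with the stated constants. Concretely, one must verify that evaluating $\cs_{2k+1}(UdU^*)$ under $\int_{\psi_{2k+1}}$ reproduces $\tilde\psi_{2k+1}(U^*,U,\ldots,U^*,U)$ up to the factorial normalization, which requires computing the curvature $F_t$ of the pure-gauge connection $t\,UdU^*$ and simplifying the resulting universal forms using only properties (I) and (II) of the bracket. This is a finite but delicate algebraic computation; the analytic convergence is already handed to us by the entirety lemma, so no further estimates are needed beyond confirming that the termwise limit $A_u \to 0$ commutes with the sum, which again follows from the $1/k!$ bound.
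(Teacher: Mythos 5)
Your proposal follows essentially the same route as the paper (which itself argues as in Connes--Chamseddine): gauge invariance of the total Chern--Simons sum, applied with reference potential $A=0$ so that both the spectral action difference and the Yang--Mills terms vanish, gives $\sum_k\int_{\psi_{2k+1}}\cs_{2k+1}(U\,dU^*)=0$, while the entirety bound supplies the needed absolute convergence. The bookkeeping you flag in step (i) is in fact short rather than delicate: for the pure gauge potential $A=U\,dU^*$ one has $A^2=-dU\,dU^*$, hence $F_t=(t-t^2)\,dU\,dU^*$ and $\cs_{2k+1}(U\,dU^*)=\bigl(\int_0^1(t-t^2)^k\,dt\bigr)\,U\,dU^*(dU\,dU^*)^k=\frac{(k!)^2}{(2k+1)!}\,U\,dU^*(dU\,dU^*)^k$, which combined with $\tilde\psi_{2k+1}=(-1)^k\frac{k!}{(2k+1)!}\psi_{2k+1}$ reproduces exactly the $(-1)^k k!$ normalization in the pairing (the prefactor $(2\pi i)^{-1/2}$ is inert since the sum vanishes).
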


\section{One-loop corrections to the spectral action}
\label{sct:One-Loop}

We now formulate a quantum version of the spectral action. To do this, we must first interpret the spectral action, expanded in terms of generalized Chern--Simons and Yang--Mills actions by Theorem \ref{thm:main thm}, as a classical action, which leads us naturally to a noncommutative geometric notion of a vertex. Enhanced with a spectral gauge propagator derived from the formalism of random matrices (and in particular, random finite noncommutative geometries) this gives us a concept of one-loop counterterms and a proposal for a one-loop \textit{quantum effective spectral action}, without leaving the spectral framework. We will show here that, at least in a finite-dimensional setting, these counterterms can again be written as Chern--Simons and Yang--Mills forms integrated over (quantum corrected) cyclic cocycles. We therefore discern a renormalization flow in the space of cyclic cocycles.

\subsection{Conventions}
  
We let $\varphi_1,\varphi_2,\ldots$ be an orthonormal basis of eigenvectors of $D$, with corresponding eigenvalues $\lambda_1,\lambda_2,\ldots$. For any $N\in\N$, we define
\begin{align*}
	H_N:=(M_N)_\sa,\quad M_N:=\spn\left\{\ket{\varphi_i}\bra{\varphi_j}:~i,j\in\{1,\ldots,N\}\right\},
\end{align*}
and endow $H_N$ with the Lebesgue measure on the coordinates $Q\mapsto \Re( Q_{ij})$ ($i\leq j$) and $Q\mapsto\Im(Q_{ij})$ ($i<j$). Here and in the following, $Q_{ij}:=\p{\varphi_i}{Q\varphi_j}$ are the matrix elements of $Q$.
For simplicity, we will assume that the perturbations $V_1,\ldots,V_n$ are in $\cup_K H_K$. 

For us, a \textit{Feynman diagram} is a finite multigraph with a number of marked vertices of degree 1 called external vertices, all other vertices being called internal vertices or, by abuse of terminology, vertices. An edge, sometimes called a propagator, is called external if it connects to an external vertex, and internal otherwise. The external vertices are simply places for the external edges to attach to, and are often left out of the discussion. An $n$-point diagram is a Feynman diagram with $n$ external edges. A Feynman diagram is called one-particle-irreducible if any multigraph obtained by removing one of the internal edges is connected.

\subsection{Diagrammatic expansion of the spectral action}
\label{sect:sa}

Viewing the spectral action as a classical action, and following the background field method, the vertices of degree $n$ in the corresponding quantum theory should correspond to $n\th$-order functional derivatives of the spectral action. However, in the paradigm of noncommutative geometry, a base manifold is absent, and functional derivatives do not exist in the local sense. Therefore, a more abstract notion of a vertex is needed. The brackets $\br{\cdot}$ from \eqref{eq:SA divdiff} that power the expansion of the spectral action in Theorems \ref{thm:main thm}  and \ref{thm:asymptotic expansion} are by construction cyclic and multilinear extensions of the derivatives of the spectral action, and as such provide an appropriate notion of \textit{noncommutative vertices}.
We define a noncommutative vertex with $V_1,\ldots,V_n\in\cup_K H_K$ on the external edges by
\begin{align}
\raisebox{-43pt}{\scalebox{0.45}{
\begin{tikzpicture}[thick]
	\draw[edge] (0,2) to (2,2);
	\draw[edge] (1,0) to (2,2);
	\draw[edge] (1,4) to (2,2);
	\draw[edge] (3,4) to (2,2);
	\draw[edge] (4,2) to (2,2);
	\ncvertex{2,2}
	\draw[line width=2pt, line cap=round, dash pattern=on 0pt off 3\pgflinewidth] (2,0) arc (-90:0:1.5cm);
	\node at (-0.5,2) {\huge $V_1$};
	\node at (0.8,4.4) {\huge $V_2$};
	\node at (3.2,4.4) {\huge $V_3$};
	\node at (4.5,2) {\huge $V_4$};
	\node at (0.7,-0.4) {\huge $V_n$};
\end{tikzpicture}}}
\quad
:=\quad\br{V_1,\ldots,V_n}.
  \label{eq:bracket}
\end{align}

In contrast to a normal vertex of a Feynman diagram, a noncommutative vertex is decorated with a  cyclic order on the edges incident to it. By convention, the edges are attached clockwise with respect to this cyclic order. 
As such, with perturbations $V_1,\ldots,V_n$ decorating the external edges, the diagram \eqref{eq:bracket} reflects the cyclicity of the bracket: $\br{V_1,\ldots,V_n}=\br{V_n,V_1,\ldots,V_{n-1}}$, the first property of Lemma \ref{cycl bracket}. 
In order to diagramatically represent the second property of Lemma \ref{cycl bracket} as well, we introduce the following notation. Wherever a gauge edge meets a noncommutative vertex we can insert a dashed line decorated with an element $a\in\A$ before or after the gauge edge, with the following meaning:
\begin{align*}
\raisebox{-10pt}{\scalebox{0.45}{
\begin{tikzpicture}[thick]
	\draw (0,0) arc (110:70:4cm);
	\draw (0,0.025) arc (110:70:4cm);
	\draw (0,0.05) arc (110:70:4cm);
	\draw (0,0.075) arc (110:70:4cm);
	\draw[edge] (2.65,0) to (2.65+0.5,2);
	\draw[streepjes] (2.65,0) to (2.3,2);
	\node at (2.2,2.5) {\huge $a$};
	\node at (3.3,2.5) {\huge $V$};
\end{tikzpicture}}}
\quad
\raisebox{10pt}{
:=
}
\raisebox{-10pt}{\scalebox{0.45}{
\begin{tikzpicture}[thick]
	\draw (0,0) arc (110:70:4cm);
	\draw (0,0.025) arc (110:70:4cm);
	\draw (0,0.05) arc (110:70:4cm);
	\draw (0,0.075) arc (110:70:4cm);
	\draw[edge] (2.65,0) to (2.65+0.5,2);
	\node at (3.3,2.5) {\huge $aV$};
\end{tikzpicture}}}
\quad
\raisebox{10pt}{
,
}
\qquad\qquad
\raisebox{-10pt}{\scalebox{0.45}{
\begin{tikzpicture}[thick]
	\draw (0,0) arc (110:70:4cm);
	\draw (0,0.025) arc (110:70:4cm);
	\draw (0,0.05) arc (110:70:4cm);
	\draw (0,0.075) arc (110:70:4cm);
	\draw[edge] (0,0) to (-0.5,2);
	\draw[streepjes] (0,0) to (0.35,2);
	\node at (0.3,2.5) {\huge $a$};
	\node at (-0.65,2.5) {\huge $V$};
\end{tikzpicture}}}
\quad
\raisebox{10pt}{
:=
}
\quad
\raisebox{-10pt}{
\scalebox{0.45}{\begin{tikzpicture}[thick]
	\draw (0,0) arc (110:70:4cm);
	\draw (0,0.025) arc (110:70:4cm);
	\draw (0,0.05) arc (110:70:4cm);
	\draw (0,0.075) arc (110:70:4cm);
	\draw[edge] (0,0) to (-0.5,2);
	\node at (-0.65,2.5) {\huge $Va$};\end{tikzpicture}}}.
\end{align*}
With this notation, the equation
\begin{align}\label{eq:classical Ward}
\br{a V_1,\ldots,V_n}-\br{V_1,\ldots,V_n a}
        =\br{V_1,\ldots,V_n,[D,a]},
\end{align}
is represented as
\begin{align}
\label{eq:ward}
\raisebox{-10pt}{\scalebox{0.45}{
\begin{tikzpicture}[thick]
	\draw (0,0) arc (110:70:4cm);
	\draw (0,0.025) arc (110:70:4cm);
	\draw (0,0.05) arc (110:70:4cm);
	\draw (0,0.075) arc (110:70:4cm);
	\draw[streepjes] (2.65,0) to (2.3,2);
	\node at (2.2,2.5) {\huge $a$};
\end{tikzpicture}}}
\quad\,\,\,
-
\,\,\quad
\raisebox{-10pt}{\scalebox{0.45}{
\begin{tikzpicture}[thick]
	\draw (0,0) arc (110:70:4cm);
	\draw (0,0.025) arc (110:70:4cm);
	\draw (0,0.05) arc (110:70:4cm);
	\draw (0,0.075) arc (110:70:4cm);
	\draw[streepjes] (0,0) to (0.35,2);
	\node at (0.3,2.5) {\huge $a$};
\end{tikzpicture}}}
\quad
=
\quad
\raisebox{-10pt}{\scalebox{0.45}{
\begin{tikzpicture}[thick]
	\draw (0,0) arc (110:70:4cm);
	\draw (0,0.025) arc (110:70:4cm);
	\draw (0,0.05) arc (110:70:4cm);
	\draw (0,0.075) arc (110:70:4cm);
	\draw[edge] (1.325,0.3) to (1.325,2);
	\node at (1.325,2.5) {\huge $[D,a]$};
\end{tikzpicture}}}
\quad
,
\end{align}
and is as such referred to as the \textit{Ward identity}.

To illustrate, let us give the relevant lower order computations.
The cyclic cocycles are expressed in terms of diagrams as
\begin{align}
  \label{eq:bracket-cochain}
  \int_{\phi_n} a^0 da^1 \cdots da^n 
  &=
  \raisebox{-41pt}{\scalebox{0.45}{
\begin{tikzpicture}[thick]
	\draw[edge] (0,2) to (2,2);
	\draw[edge] (1,0) to (2,2);
	\draw[edge] (1,4) to (2,2);
	\draw[edge] (3,4) to (2,2);
	\draw[edge] (4,2) to (2,2);
	\ncvertex{2,2}
	\draw[line width=2pt, line cap=round, dash pattern=on 0pt off 3\pgflinewidth] (2.5,-0.5) arc (-85:-15:2cm);
	\node at (-1.4,2) {\huge $a^0[D,a^1]$};
	\node at (0.5,4.4) {\huge $[D,a^2]$};
	\node at (3.5,4.4) {\huge $[D,a^3]$};
	\node at (5.1,2) {\huge $[D,a^4]$};
	\node at (0.7,-0.4) {\huge $[D,a^n]$};
\end{tikzpicture}}}.
\end{align}

\noindent For one external edge we find, writing $A=\sum_j a_jdb_j$ and suppressing summation over $j$,
\begin{align}
\br{V} =   \br{a_j [D,b_j]} &= 
\,\,
\raisebox{-4pt}{\scalebox{.45}{
\begin{tikzpicture}[thick]
	\draw[edge] (1,0) to (3,0);
	\ncvertex{3,0}
	\node at (0.5,0.6) {\huge $a_j[D,b_j]$};
\end{tikzpicture}}}
\quad
= 
\int_{\phi_1} A.
\end{align}
For two external edges, we apply the Ward identity \eqref{eq:ward} and derive
\begin{align*}
  \br{V,V}&=
\quad
\raisebox{-5pt}{\scalebox{0.45}{
\begin{tikzpicture}[thick]
	\draw[edge] (1,0) to (3,0);
	\draw[streepjes] (3.4,0.02) to (5,2);
	\draw[edge] (3,0) to (5,0);
	\ncvertex{3,0}
	\node at (-0.5,0) {\huge $a_j[D,b_j]$};
	\node at (6,0) {\huge $[D,b_{j'}]$};
	\node at (5.6,2.1) {\huge $a_{j'}$};
\end{tikzpicture}}}  
  \\[1mm]
  &=
  \quad
\raisebox{-5pt}{\scalebox{0.45}{
\begin{tikzpicture}[thick]
	\draw[edge] (1,0) to (3,0);
	\draw[streepjes] (2.55,0.08) to (1,2);
	\draw[edge] (3,0) to (5,0);
	\ncvertex{3,0}
	\node at (-0.5,0) {\huge $a_j[D,b_j]$};
	\node at (6,0) {\huge $[D,b_{j'}]$};
	\node at (0.4,2.1) {\huge $a_{j'}$};
\end{tikzpicture}}}
\quad
+
\quad
\raisebox{-5pt}{\scalebox{0.45}{
\begin{tikzpicture}[thick]
	\draw[edge] (1,0) to (3,0);
	\draw[edge] (3,0) to (3,1.6);
	\draw[edge] (3,0) to (5,0);
	\ncvertex{3,0}
	\node at (-0.5,0) {\huge $a_j[D,b_j]$};
	\node at (6,0) {\huge $[D,b_{j'}]$};
	\node at (3,2.1) {\huge $[D,a_{j'}]$};
\end{tikzpicture}}}
\\[4mm]
  &= \int_{\phi_2} A^2 
  + \int_{\phi_3} A d A.
\end{align*}

\subsubsection{The propagator}           
An important part of the quantization process introduced here is to find a mathematical formulation for the propagator. In other words, we need to introduce more general diagrams than the one-vertex diagram in \eqref{eq:bracket}, and assign each an amplitude. As usual in quantum field theory, the amplitudes depend on a cutoff $N$ and are possibly divergent as $N\to\infty$. 

What we will call a \textit{noncommutative Feynman diagram} (or, for brevity, a diagram) is a Feynman diagram in which every internal vertex $v$ is decorated with a cyclic order on the edges incident to $v$. These decorated vertices are what we call the noncommutative vertices, and are denoted as in \eqref{eq:bracket}. 
The edges of a diagram are always drawn as wavy lines. They are sometimes called gauge edges to distinguish them from any dashed lines in the diagram, which do not represent physical particles, but are simply notation. The \textit{loop order} is defined to be $L:=1-V+E$, where $V$ is the amount of (noncommutative) vertices and $E$ is the amount of internal edges. We also say the noncommutative Feynman diagram is $L$-loop, e.g., the noncommutative Feynman diagram in \eqref{eq:bracket} is zero-loop. When the respective multigraph is planar, $L$ corresponds to the number of internal faces. Following physics terminology, these faces are referred to as \textit{loops}.
As usual for Feynman diagrams, the external edges are marked, say by the numbers $1,\ldots,n$.

Note that, by our definition, a noncommutative Feynman diagram is almost the same as a ribbon graph, the sole difference being that ribbons are sensitive to twisting, whereas our edges are not.

Each nontrivial noncommutative Feynman diagram will be assigned an \textit{amplitude}, as follows. Here \textit{nontrivial} means that every connected component contains at least one vertex with nonzero degree.

\begin{defi}\label{def:Propagator}
Let $N\in\N$ and let $f\in C^\infty$ satisfy $f'[\lambda_i,\lambda_j]>0$ for $i,j\leq N$. Given a nontrivial $n$-point noncommutative Feynman diagram $G$ with external vertices marked by $1,\ldots,n$, its \textbf{amplitude} at level $N\in\N$ on the gauge fields $V_1,\ldots,V_n\in\cup_K H_K$ is denoted $\Gamma_N^G(V_1,\ldots,V_n)$, and is defined recursively as follows. When $G$ has precisely one vertex and the markings $1,\ldots,n$ respect its cyclic order, we set $\Gamma_N^G(V_1,\ldots,V_n):=\br{V_1,\ldots,V_n}$. Suppose the amplitudes of diagrams $G_1$ and $G_2$ with external edges $1,\ldots,n$ and $n+1,\ldots,m$ are defined. Then to the disjoint union $G$ of the diagrams we assign the amplitude
\begin{align*}
	\Gamma_N^{G}(V_1,\ldots,V_m):=\Gamma_N^{G_1}(V_1,\ldots,V_n)\Gamma_N^{G_2}(V_{n+1},\ldots,V_m).
\end{align*}
 Suppose the amplitude of a diagram $G$ is defined. Then, for any two distinct numbers $i,j\in\{1,\ldots,n\}$, let $G'$ be the diagram obtained from $G$ by connecting the two external edges $i$ and $j$ by a gauge edge (a propagator). We then define the amplitude of $G'$ as
\begin{align*}
	\Gamma_N^{G'}(V_1,\ldots,\widehat{V_i},\ldots,\widehat{V_j},\ldots,V_{n}):=-\frac{\int_{H_N}\Gamma_N^G(V_1,\ldots,\overset{i}{Q},\ldots,\overset{j}{Q},\ldots,V_{n})e^{-\tfrac12\br{Q,Q}}dQ}{\int_{H_N}e^{-\tfrac12\br{Q,Q}}dQ}.
\end{align*}
\end{defi}

Well-definedness is a straightforward consequence of Fubini's theorem. Note that, in general, $\Gamma^G_N$ is not cyclic in its arguments, as was the case in \eqref{eq:bracket}.

\begin{figure}[h!]
\hspace{30pt}
\scalebox{0.45}{
\hspace{24pt}
\begin{subfigure}[t]{0.4\textwidth}
	\image{0.92}{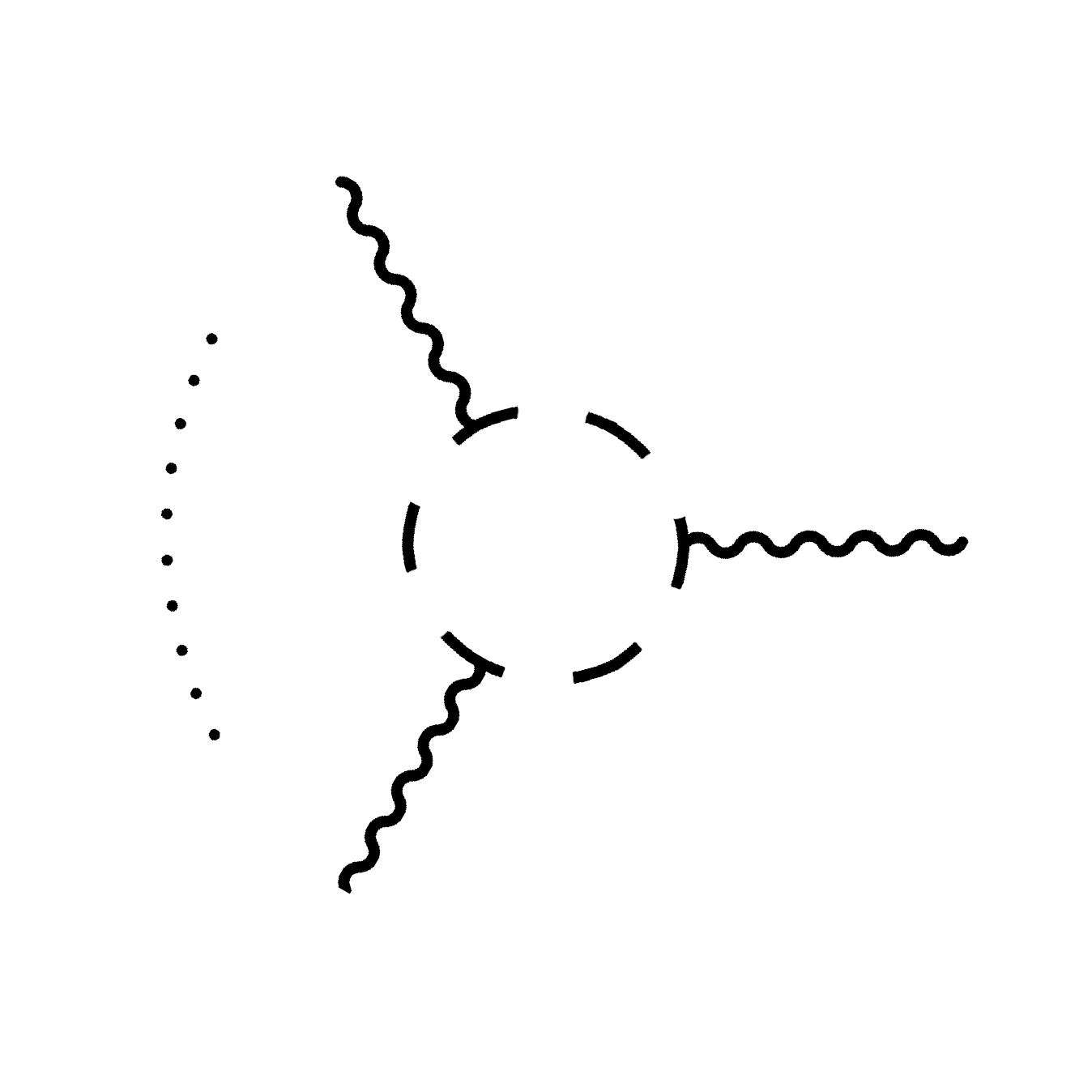}
	\put(-104,116){\Large $V_n$}
	\put(-104,11){\Large $V_1$}
	\put(-10,63){\Large $Q$}
	\put(-72.4,62.5){\Large $G_1$}
\end{subfigure}
\begin{subfigure}[t]{0.4\textwidth}
	\image{0.92}{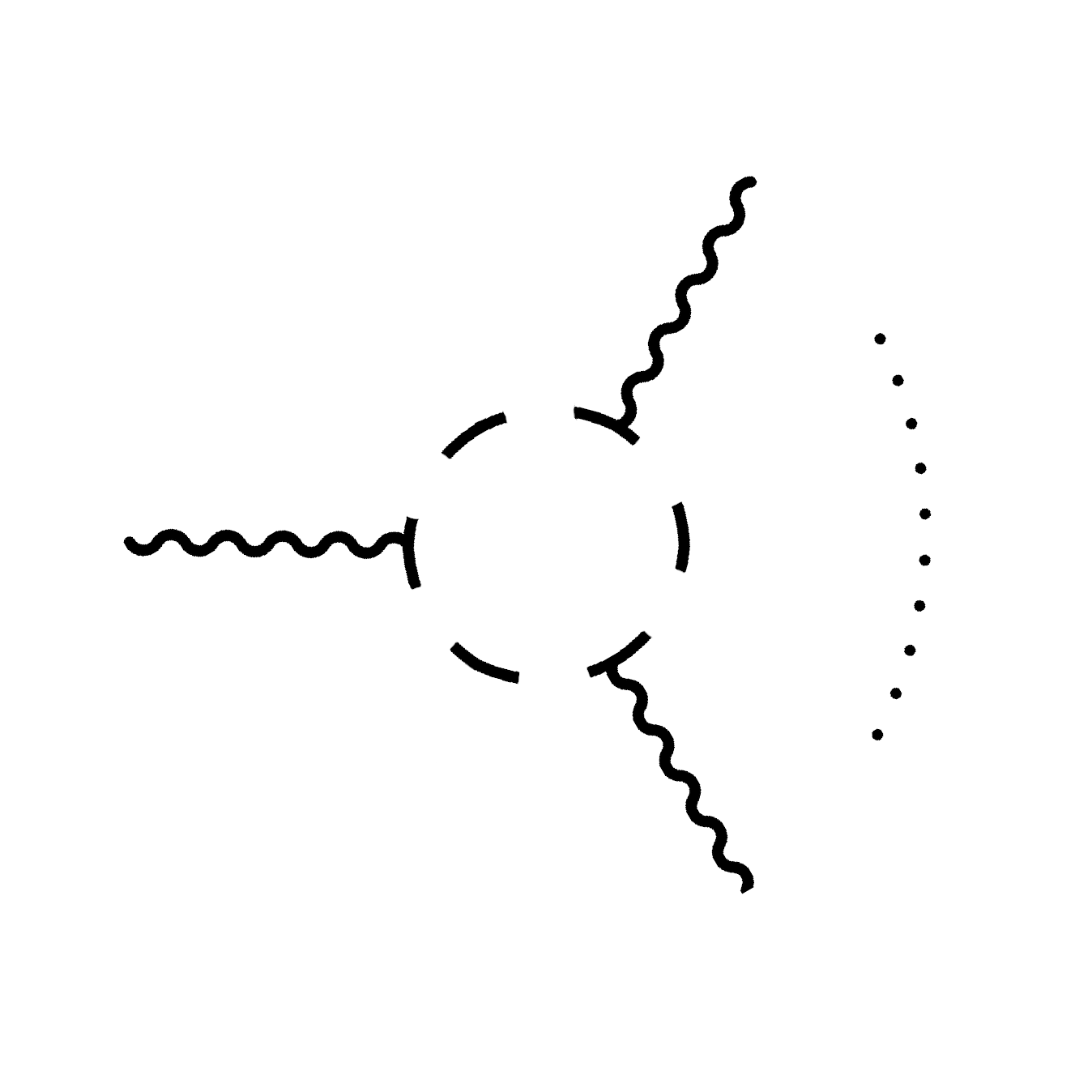}
	\put(-37,114){\Large $V_{n+1}$}
	\put(-37,12){\Large $V_m$}
	\put(-132,63){\Large $Q$}
	\put(-74,62.5){\Large $G_2$}
\end{subfigure}
\begin{subfigure}[t]{\textwidth}
	\image{0.664}{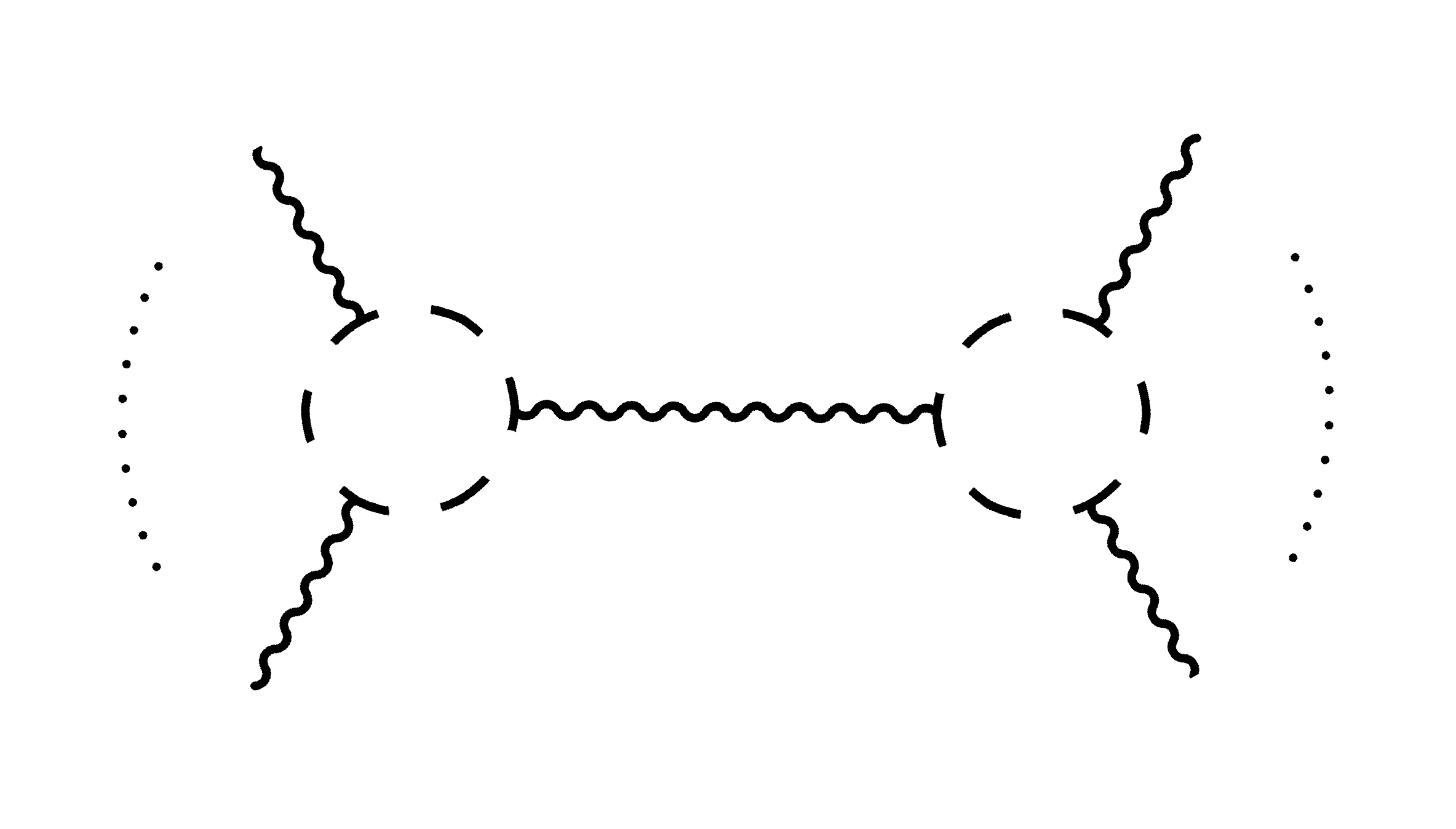} 
	\put(-214,118){\Large $V_n$}
	\put(-215,11){\Large $V_1$}
	\put(-178,65){\Large $G_1$}
	\put(-37,117){\Large $V_{n+1}$}
	\put(-37,14){\Large $V_m$}
	\put(-75,65){\Large $G_2$}
\end{subfigure}
}
\caption{Constructing the propagator.}
\end{figure}

The assumption that $f'[\lambda_i,\lambda_j]>0$ for $i,j\leq N$ can be accomplished by allowing $f$ to be unbounded, and replacing the spectral action
$$\Tr(f(D))$$
with the regularized version
$$\Tr(f_N(D))$$
where $f_N:=f\Phi_N$ for a sequence of bump functions $\Phi_N$ ($N\in\N$) that are 1 on $\{\lambda_k:~k\leq N\}$. As quantization takes place on the finite level (for a finite $N$), it is natural to also regularize the classical action before we quantize.
 Because we can now easily require
 	$$f_N'[\lambda_k,\lambda_l]=f'[\lambda_k,\lambda_l]>0,$$ for all $k,l\leq N$, Definition \ref{def:Propagator} makes sense and can be studied by Gaussian integration as in  \cite[Section 2]{BIZ80}.
 	
\subsection{Loop corrections to the spectral action}
 	
	To obtain the propagator, we have chosen the approach of random noncommutative geometries (as done in \cite{AK19,KP20}, see \cite{BG16,GS19a} for computer simulations) in the sense that the integrated space in Definition \ref{def:Propagator} is the whole of $H_N$. Other approaches are conceivable by replacing $H_N$ by a subspace of gauge fields particular to the gauge theory under consideration (like $\Omega^1_D(\A)_{\sa}$ for a finite spectral triple $(\A,\H,D)$) but this should also take into account gauge fixing, and will quickly become very involved. We expect to require sophisticated machinery to perform such an integration, similar to the machinery in \cite{EF}.

 	In our case, the propagator becomes quite simple, and can be explicitly expressed by the following result.

\begin{lem}\label{lem:propagator}
Let $f\in C^\infty$ satisfy $f'[\lambda_k,\lambda_l]>0$ for $k,l\leq N$. For $k,l,m,n\in\{1,\ldots,N\}$, we have
$$
\frac{ \int_{H_N} Q_{kl}  Q_{mn} e^{-\tfrac12\br{Q,Q}} dQ} {\int_{H_N} e^{-\tfrac12\br{Q,Q} } dQ} = 
\delta_{kn}\delta_{lm} G_{kl},
$$
in terms of $G_{kl} := \frac{1}{f'[\lambda_k, \lambda_l]}$.
\end{lem}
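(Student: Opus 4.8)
The plan is to reduce the identity to a computation of second moments for a family of independent one-dimensional Gaussians. First I would make the quadratic form $\br{Q,Q}$ fully explicit. Specializing the contour formula \eqref{eq:br cycl} to $n=2$, expanding the trace in the eigenbasis of $D$ (so that each resolvent $(z-D)^{-1}$ is diagonal), and applying Cauchy's formula for divided differences yields, for $Q\in H_N$,
\begin{align*}
  \br{Q,Q}=\sum_{k,l=1}^N f'[\lambda_k,\lambda_l]\,Q_{kl}Q_{lk}.
\end{align*}
Since $Q$ is self-adjoint we have $Q_{lk}=\overline{Q_{kl}}$, so $Q_{kl}Q_{lk}=|Q_{kl}|^2$, and the weight $e^{-\frac12\br{Q,Q}}$ is a centred Gaussian whose quadratic form is \emph{diagonal} in the matrix entries.

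Second, I would pass to the independent real coordinates of $H_N$: the real diagonal entries $Q_{kk}$ and the pairs $\Re(Q_{kl}),\Im(Q_{kl})$ for $k<l$. Writing $|Q_{kl}|^2=\Re(Q_{kl})^2+\Im(Q_{kl})^2$ and combining the symmetric terms $(k,l)$ and $(l,k)$ (using $f'[\lambda_k,\lambda_l]=f'[\lambda_l,\lambda_k]$) gives
\begin{align*}
  \tfrac12\br{Q,Q}=\tfrac12\sum_{k=1}^N f'[\lambda_k,\lambda_k]\,Q_{kk}^2+\sum_{1\le k<l\le N} f'[\lambda_k,\lambda_l]\big(\Re(Q_{kl})^2+\Im(Q_{kl})^2\big).
\end{align*}
The hypothesis $f'[\lambda_k,\lambda_l]>0$ for $k,l\le N$ makes this positive definite, so every Gaussian integral converges absolutely and the weight factorizes as a product of one-dimensional Gaussians, one per real coordinate; both the numerator and the denominator in the Lemma factorize accordingly.

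Third, I would read off the moments. Writing $\langle\,\cdot\,\rangle$ for the normalized Gaussian average defined by the ratio in the Lemma, independence and the oddness of the integrand force any average of two distinct real coordinates to vanish, while the one-dimensional identity $\int x^2 e^{-ax^2/2}\,dx\big/\!\int e^{-ax^2/2}\,dx=1/a$ gives
$$\langle Q_{kk}^2\rangle=\frac{1}{f'[\lambda_k,\lambda_k]},\qquad \langle\Re(Q_{kl})^2\rangle=\langle\Im(Q_{kl})^2\rangle=\frac{1}{2f'[\lambda_k,\lambda_l]}\quad(k<l).$$
Expressing $Q_{kl}$ and $Q_{mn}$ through these coordinates and using $Q_{mn}=\overline{Q_{nm}}$, the average $\langle Q_{kl}Q_{mn}\rangle$ vanishes unless $\{m,n\}=\{k,l\}$; the choice $(m,n)=(l,k)$ gives $\langle|Q_{kl}|^2\rangle=1/f'[\lambda_k,\lambda_l]=G_{kl}$, whereas for $k\ne l$ the choice $(m,n)=(k,l)$ gives $\langle\Re(Q_{kl})^2\rangle-\langle\Im(Q_{kl})^2\rangle=0$. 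Together with the diagonal case $k=l$, this is precisely $\delta_{kn}\delta_{lm}G_{kl}$.

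The step needing the most care is the complex bookkeeping of the off-diagonal entries: one must separate $\langle Q_{kl}Q_{lk}\rangle=G_{kl}$ from $\langle Q_{kl}Q_{kl}\rangle=0$, and verify that the factor of two between the diagonal variance $1/f'[\lambda_k,\lambda_k]$ and each off-diagonal real-coordinate variance $1/(2f'[\lambda_k,\lambda_l])$ recombines, through $|Q_{kl}|^2=\Re^2+\Im^2$, into the single uniform propagator $G_{kl}=1/f'[\lambda_k,\lambda_l]$ valid in all cases. The only other thing to pin down is the precise normalization of $\br{Q,Q}$, which is exactly what ensures the weight $e^{-\frac12\br{Q,Q}}$ produces these variances with no spurious constant.
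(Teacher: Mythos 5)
Your proposal is correct and follows essentially the same route as the paper's proof: expand $\br{Q,Q}$ via the divided-difference formula into a quadratic form that is diagonal in the real coordinates $\Re(Q_{kl}),\Im(Q_{kl})$, kill all cross terms by parity/independence, and reduce the surviving cases to the one-dimensional Gaussian second moment, with the factor of two between the diagonal and off-diagonal variances recombining into the uniform $G_{kl}=1/f'[\lambda_k,\lambda_l]$. The bookkeeping distinguishing $\langle Q_{kl}Q_{lk}\rangle=G_{kl}$ from $\langle Q_{kl}Q_{kl}\rangle=0$ is exactly the cancellation the paper carries out.
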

\begin{proof}
By \eqref{eq:SA divdiff} we have the finite sum
\begin{align*}
\br{Q,Q}=\sum_{k,l} f'[\lambda_k,\lambda_l]\left((\Re (Q_{kl}))^2+(\Im(Q_{kl}))^2\right),
\end{align*}
for all $Q\in H_N$. Moreover, we have
\begin{align*}
&\int_{H_N} Q_{kl}  Q_{mn} e^{-\frac 12 \br{Q,Q}} dQ \\
&\quad=  \int_{H_N} (\Re(Q_{kl})\Re(Q_{mn})-\Im(Q_{kl})\Im(Q_{mn}))e^{-\frac 12 \br{Q,Q}} dQ\\
&\qquad+i\int_{H_N}(\Re(Q_{kl})\Im(Q_{mn}) + \Im(Q_{kl})\Re(Q_{mn}))e^{-\frac 12 \br{Q,Q}} dQ.
\end{align*}
The second integral on the right-hand side vanishes because its integrand is an odd function in at least one of the coordinates of $H_N$. The same holds for the first integral whenever $\{k,l\}\neq\{m,n\}$. Otherwise, we use that $\Re(Q_{lk})=\Re(Q_{kl})$ and $\Im(Q_{lk})=-\Im(Q_{kl})$ and see that the two terms of the first integral cancel when $k=m$ and $l=n$. When $k=n\neq l=m$, we instead find that these terms give the same result when integrated. By using symmetry of the divided difference (i.e., $f'[x,y]=f'[y,x]$) and integrating out all trivial coordinates, we obtain
\begin{align*}
\frac{\int_{H_N} Q_{kl}  Q_{mn} e^{-\frac 12 \br{Q,Q}} dQ}{\int_{H_N} e^{-\frac 12 \br{Q,Q}} dQ} =&  \delta_{kn}\delta_{lm}\frac{2\int_\R (\Re(Q_{kl}))^2 e^{-f'[\lambda_k,\lambda_l](\Re(Q_{kl}))^2}d\Re(Q_{kl})}{\int_\R e^{-f'[\lambda_k,\lambda_l](\Re(Q_{kl}))^2}d\Re(Q_{kl})},
\end{align*}
a Gaussian integral that gives the $G_{kl}$ required by the lemma. When $k=l=n=m$, the result follows similarly.
\end{proof}

The above lemma allows us to leave out all integrals from the subsequent computations. In place of those integrals, we use the following notation.
\begin{defi}\label{def:propagator}
We  define, with slight abuse of notation,
\begin{align*}
	\wick{\c Q_{kl} \hspace{16pt}\c Q_{mn} }:=\delta_{kn}\delta_{lm} G_{kl},
\end{align*}
and refer to $G_{kl}$ as the \textit{propagator}.
\end{defi}

As an example and to fix terminology, we will now compute the amplitudes of the three most basic one-loop diagrams with two external edges.
These are given in Figure \ref{fig:1loop-2pt}.
Using Lemma \ref{lem:propagator} and Definition \ref{def:propagator}, we find the amplitude for the first diagram to be
\begin{align}
  \label{eq:2ptA}
 \hspace{-3pt}\raisebox{-10pt}{\scalebox{0.45}{
\begin{tikzpicture}[thick]
	\draw[edge] (0.7,0) to (1.8,0);
	\draw[edge] (1.8,0) to (2.9,0);
	\draw[edge] (3.1,0) to (4,0);
	\draw[edge] (1.8,0) arc (0:180:-0.6cm);
	\draw[edge] (3,0) arc (180:0:0.6cm);
	\draw[edge] (4,0) to (5.3,0);
	\ncvertex{2,0}
	\ncvertex{4,0}
	\node at (0.3,0) {\huge $V_1$};
	\node at (5.7,0) {\huge $V_2$};
\end{tikzpicture}}}
 &= \sum_{\begin{smallmatrix} i,j,k,l, \\ m,n\leq N\end{smallmatrix}} f'[\lambda_i,\lambda_j,\lambda_k](V_1)_{ij} \wick{ \c1 Q_{jk} \c2 Q_{ki} f'[\lambda_l,\lambda_m,\lambda_n](V_2)_{lm} \c1 Q_{mn} \c2 Q_{nl} }  \nonumber      \nonumber \\
 &  = \sum_{i,k\leq N} f'[\lambda_i, \lambda_i,\lambda_k]f'[\lambda_i, \lambda_k,\lambda_k](V_1)_{ii} (V_2)_{kk} (G_{ik})^2  .
  \end{align}
As $V_1$ and $V_2$ are assumed of finite rank, the above expression converges as $N\to\infty$. To see this explicitly, let $K$ be such that $V_1,V_2\in H_K$, and let $G$ be the diagram on the left-hand side of \eqref{eq:2ptA}. We then obtain
\begin{align}\label{eq:irrelevant diagram}
	\lim_{N\to\infty}\Gamma^G_N(V_1,V_2)=\sum_{i,k\leq K}f'[\lambda_i,\lambda_i,\lambda_k]f'[\lambda_i,\lambda_k,\lambda_k](V_1)_{ii}(V_2)_{kk}(G_{ik})^2,
\end{align}
a finite number. In general we can say that if all summed indices of an amplitude occur in a matrix element of any of the perturbations (e.g., $(V_1)_{ii}$ and $(V_2)_{kk}$) then the amplitude remains finite even when the size $N$ of the random matrices $Q$ is sent to $\infty$. In physics terminology, the first diagram in Figure \ref{fig:1loop-2pt} is \textit{irrelevant}, and can be disregarded for renormalization purposes.

We then turn to the second diagram in Figure \ref{fig:1loop-2pt}, and compute
\begin{align}
\raisebox{-8pt}{\scalebox{0.45}{
\begin{tikzpicture}[thick]
	\draw[edge] (1,0) to (2.1,0);
	\draw[edge] (2.1,0) to[out=90,in=90] (3.9,0);
	\draw[edge] (2.1,0) to[out=-90,in=-90] (3.9,0);
	\draw[edge] (3.9,0) to (5,0);
	\ncvertex{2,0}
	\ncvertex{4,0}
	\node at (0.5,0) {\huge $V_1$};
	\node at (5.5,0) {\huge $V_2$};
\end{tikzpicture}}}
&= \sum_{\begin{smallmatrix} i,j,k,l, \\ m,n\leq N\end{smallmatrix}} f'[\lambda_i,\lambda_j,\lambda_k] (V_1)_{ij} \wick{ \c1 Q_{jk} \c2 Q_{ki} f'[\lambda_l,\lambda_m,\lambda_n](V_2)_{lm} \c2 Q_{mn} \c1 Q_{nl} } \nonumber \\
 &= \sum_{i,j,k\leq N} (f'[\lambda_i, \lambda_j,\lambda_k])^2 (V_1)_{ij} (V_2)_{ji} G_{ik}G_{kj} .
\label{eq:vertexcontr}
\end{align}
This diagram is planar, and the indices $i,j,k$ correspond to regions in the plane, assuming the external edges are regarded to stretch out to infinity. The index $k$ corresponds to the region within the loop, and is called a \textit{running loop index}. As the index $k$ is not restricted by $V_1$ and $V_2$ as in \eqref{eq:2ptA}, we find that in general the amplitude \eqref{eq:vertexcontr} diverges as $N \to \infty$.
In physical terms, this is a \textit{relevant} diagram.

The amplitude of the final diagram becomes
\begin{align}
 \raisebox{-15pt}{\scalebox{0.45}{ 
\begin{tikzpicture}[thick]
	\draw[edge] (0,-1) to (2,0);
	\draw[edge] (4,-1) to (2,0);
	\draw[edge] (1.9,0) arc (-90:270:0.8cm);
	\ncvertex{2,0}
	\node at (-.4,-1) {\huge $V_1$};
	\node at (4.4,-1) {\huge $V_2$};
\end{tikzpicture}}}
\quad&= -\sum_{ i,j,k,l\leq N} f'[\lambda_i,\lambda_j,\lambda_k,\lambda_l](V_1)_{ij} \wick{ \c Q_{jk} \c Q_{kl} } (V_2)_{li}   \nonumber \\
 &= -\sum_{i,j,k\leq N} f'[\lambda_i, \lambda_j,\lambda_j,\lambda_k](V_1)_{ij} (V_2)_{ji} G_{jk} .
 \label{eq:vertexcontr2}
\end{align}
Again, this amplitude contains a running loop index and is therefore potentially divergent in the limit $N \to \infty$. 

\begin{figure}
\hspace{24pt}
    \begin{tabular}{p{.3\linewidth}p{.3\linewidth}p{.3\linewidth}}
 \scalebox{.6}{
    \begin{tikzpicture}[thick]
	\draw[edge] (0.5,0) to (1.8,0);
	\draw[edge] (1.8,0) to (2.9,0);
	\draw[edge] (3.1,0) to (4.2,0);
	\draw[edge] (1.8,0) arc (0:180:-0.6cm);
	\draw[edge] (3,0) arc (180:0:0.6cm);
	\draw[edge] (4.2,0) to (5.5,0);
	\ncvertex{2,0}
	\ncvertex{4,0}
\end{tikzpicture}}
 &\scalebox{.6}{
    \begin{tikzpicture}[thick]
	\draw[edge] (0.5,0) to (2.1,0);
	\draw[edge] (2.1,0) to[out=90,in=90] (3.9,0);
	\draw[edge] (2.1,0) to[out=-90,in=-90] (3.9,0);
	\draw[edge] (3.9,0) to (5.5,0);
	\ncvertex{2,0}
	\ncvertex{4,0}
\end{tikzpicture}}
 &
 \scalebox{.6}{
 \begin{tikzpicture}[thick]
	\draw[edge] (0,-1) to (2,0);
	\draw[edge] (4,-1) to (2,0);
	\draw[edge] (1.9,0) arc (-90:270:0.8cm);
	\ncvertex{2,0}
\end{tikzpicture}}
 \end{tabular}
 
  \caption{Two-point diagrams with one loop. The first one is irrelevant, the second and third are relevant.}
  \label{fig:1loop-2pt}
\end{figure}
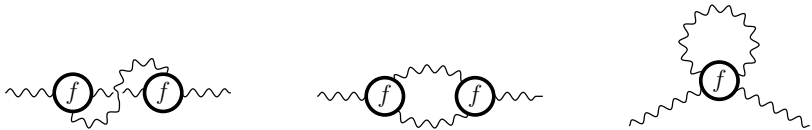

\subsubsection{One-loop counterterms to the spectral action}

Because we are interested in the behavior of the one-loop quantum effective spectral action as $N\to\infty$, we wish to consider only one-loop noncommutative Feynman diagrams whose amplitudes involve a running loop index. For example, the final two diagrams in Figure \ref{fig:1loop-2pt}, but not the first. 

As dictated by the background field method, in order to obtain a quantum effective action we should further restrict to one-particle-irreducible diagrams whose vertices have degree $\geq 3$.

Let us fix a one-loop one-particle-irreducible diagram $G$ in which all vertices have degree $\geq3$, and investigate whether the amplitude of $G$ contains a running loop index. Fix a noncommutative vertex $v$ in $G$. The vertex $v$ will have precisely two incident edges that belong to the loop of the diagram, and at least one external edge. Each index associated with $v$ is associated specifically with two incident edges of $v$. If one of these edges is external, the index will not run, because it will be fixed by the gauge field attached. A running index can only occur if the two incident loop edges of $v$ succeed one another, and the index is placed in between them. The latter of these two loop edges will attach to another noncommutative vertex, $w$, and the possibly running index will also be associated with the succeeding edge in $w$, which also has to be a loop edge if the index is to run. This process may continue throughout the loop until we end up at the original vertex $v$. By this argument, the amplitude of $G$ will contain a running loop index if and only if $G$ can be drawn in the plane with all noncommutative vertices oriented clockwise and all external edges extending outside the loop.

The wonderful conclusion is that the external edges of the relevant diagrams obtain a natural cyclic order. This presents us with a natural one-loop quantization of the bracket $\br{\cdot}$, and thus with a natural proposal for the one-loop quantization of the spectral action.

\begin{defi}\label{def:quantum effective SA}
	Let $N\in\N$ and let $f\in C^\infty$ satisfy $f'[\lambda_i,\lambda_j]>0$ for $i,j\leq N$. We define 
	\begin{align*}
		\bbr{V_1,\ldots,V_n}:=\sum_G \Gamma_N^G(V_1,\ldots,V_n),
	\end{align*}	
	where the sum is over all planar one-loop one-particle-irreducible $n$-point noncommutative Feynman diagrams $G$ with clockwise vertices of degree $\geq3$ and external edges outside the loop and marked cyclically. The \textbf{one-loop quantum effective spectral action} is defined to be the formal series
	$$
\sum_{n=1}^\infty \frac{1}{n} \bbr{V,\ldots,V}.
     $$
\end{defi}
Directly from the definition of $\bbr{\cdot}$, we see that
\begin{align*}
	\bbr{V_2,\ldots,V_n,V_1}=\bbr{V_1,\ldots,V_n}.
\end{align*}
In other words, the property \ref{cyclicity general} holds for the bracket $\brr{\cdot}=\bbr{\cdot}$.
In the next subsection we will show that \ref{commutation general} holds as well.

\subsubsection{Ward identity for the gauge propagator}
In addition to the Ward identity \eqref{eq:ward} for the noncommutative vertex, we claim that we also have the following Ward identity for the gauge edge:
\begin{align}
  \label{eq:ward-gauge}
\raisebox{-20pt}{\scalebox{0.45}{
\begin{tikzpicture}[thick]
	\draw (0,0) arc (-20:20:4cm);
	\draw (-0.025,0) arc (-20:20:4cm);
	\draw (-0.05,0) arc (-20:20:4cm);
	\draw (-0.075,0) arc (-20:20:4cm);
	\draw[edge] (0.15,1.4) to (2.95,1.4);
	\draw (3.1,0) arc (20:-20:-4cm);
	\draw (3.125,0) arc (20:-20:-4cm);
	\draw (3.15,0) arc (20:-20:-4cm);
	\draw (3.175,0) arc (20:-20:-4cm);
	\draw[streepjes] (2.9,1.4) to (2.3,2.9);
	\node at (2.1,3.2) {\huge $a$};
\end{tikzpicture}}}
\quad-\quad
\raisebox{-20pt}{\scalebox{0.45}{
\begin{tikzpicture}[thick]
	\draw (0,0) arc (-20:20:4cm);
	\draw (-0.025,0) arc (-20:20:4cm);
	\draw (-0.05,0) arc (-20:20:4cm);
	\draw (-0.075,0) arc (-20:20:4cm);
	\draw[edge] (0.15,1.4) to (2.95,1.4);
	\draw (3.1,0) arc (20:-20:-4cm);
	\draw (3.125,0) arc (20:-20:-4cm);
	\draw (3.15,0) arc (20:-20:-4cm);
	\draw (3.175,0) arc (20:-20:-4cm);
	\draw[streepjes] (0.2,1.4) to (0.75,2.9);
	\node at (0.9,3.2) {\huge $a$};
\end{tikzpicture}}}
\quad=\quad
\raisebox{-20pt}{\scalebox{0.45}{
\begin{tikzpicture}[thick]
	\draw (0,0) arc (-20:20:4cm);
	\draw (-0.025,0) arc (-20:20:4cm);
	\draw (-0.05,0) arc (-20:20:4cm);
	\draw (-0.075,0) arc (-20:20:4cm);
	\draw[edge] (0.15,1.4) to (1.5,1.4);
	\draw[edge] (1.5,1.4) to (2.95,1.4);
	\draw (3.1,0) arc (20:-20:-4cm);
	\draw (3.125,0) arc (20:-20:-4cm);
	\draw (3.15,0) arc (20:-20:-4cm);
	\draw (3.175,0) arc (20:-20:-4cm);
	\draw[edge] (1.5,1.4) to (1.5,2.8);
	\ncvertex{1.5,1.4}
	\node at (1.5,3.2) {\huge $[D,a]$};
\end{tikzpicture}}}
\end{align}
Indeed, the left-hand side yields terms
  \begin{align*}
    \sum_{m\leq N}\big(\wick{ \c Q_{ik} \c Q_{lm} a_{mn} }-   \wick{ a_{im} \c  Q_{mk}  \c Q_{ln}}\big)
    &=  \sum_{m\leq N}\big(G_{ik} \delta_{im} \delta_{kl} a_{mn} - G_{ln} \delta_{mn} \delta_{kl} a_{im}\big) \\
    &= ( G_{ik}- G_{nk} )\delta_{kl} a_{in},
   \end{align*}
   for arbitrary values of $i$, $k$, $l$, and $n$ determined by the rest of the diagram.
  The right-hand side, by the defining property of the divided difference, and because every internal edge adds a minus sign, yields the terms
   \begin{align*}
  &-\sum_{p,q,r\leq N}\wick{ \c Q_{ik} f'[\lambda_p, \lambda_q, \lambda_r] \c Q_{pq}} [D,a]_{qr}\wick{ \c Q_{rp} \c  Q_{ln}} \\
    &\qquad= - \sum_{p,q,r\leq N}f'[\lambda_p, \lambda_q, \lambda_r](\lambda_q -\lambda_r)a_{qr}G_{ik} \delta_{iq} \delta_{kp} G_{rp} \delta_{rn} \delta_{pl}  \\
    &\qquad=   \left( f'[\lambda_k, \lambda_n] -  f'[\lambda_i, \lambda_k] \right)G_{ik} G_{nk}  \delta_{kl} a_{in}.
    \end{align*}
Because $G_{kl}=1/f'[\lambda_k,\lambda_l]$ (see Lemma \ref{lem:propagator}) the two expressions coincide for every value of $i$, $k$, $l$, and $n$, thereby allowing us to apply the rule \eqref{eq:ward-gauge} whenever it comes up as part of a diagram. For example, by combining \eqref{eq:ward-gauge} with \eqref{eq:ward}, we have

\begin{align*}\label{eq:example quantum Ward}
&\raisebox{-20pt}{\scalebox{.45}{
    \begin{tikzpicture}[thick]
	\draw[edge] (0.5,0) to (2.1,0);
	\draw[edge] (2.1,0) to[out=90,in=90] (3.9,0);
	\draw[edge] (2.1,0) to[out=-90,in=-90] (3.9,0);
	\draw[edge] (3.9,0) to (5.5,0);
	\draw[streepjes] (1.6,0) to (1,-1.5);
	\ncvertex{2,0}
	\ncvertex{4,0}
	\node at (0,0) {\huge $V_1$};
	\node at (6,0) {\huge $V_2$};
	\node at (0.9,-1.7) {\huge $a$};
\end{tikzpicture}}}
\quad
-
\raisebox{-20pt}{\scalebox{.45}{
    \begin{tikzpicture}[thick]
	\draw[edge] (0.5,0) to (2.1,0);
	\draw[edge] (2.1,0) to[out=90,in=90] (3.9,0);
	\draw[edge] (2.1,0) to[out=-90,in=-90] (3.9,0);
	\draw[edge] (3.9,0) to (5.5,0);
	\draw[streepjes] (4.4,0) to (5,-1.5);
	\ncvertex{2,0}
	\ncvertex{4,0}
	\node at (0,0) {\huge $V_1$};
	\node at (6,0) {\huge $V_2$};
	\node at (5.1,-1.7) {\huge $a$};
\end{tikzpicture}}}\\
&\raisebox{-20pt}{
\quad\vspace{20pt}\raisebox{-15pt}{
=
}
\raisebox{-35pt}{\scalebox{.45}{
    \begin{tikzpicture}[thick]
	\draw[edge] (0.5,0) to (2.1,0);
	\draw[edge] (2.1,0) to[out=90,in=90] (3.9,0);
	\draw[edge] (2.1,0) to[out=-90,in=-90] (3.9,0);
	\draw[edge] (3.9,0) to (5.5,0);
	\draw[edge] (2,0) to (1.2,-1.5);
	\ncvertex{2,0}
	\ncvertex{4,0}
	\node at (0,0) {\huge $V_1$};
	\node at (6,0) {\huge $V_2$};
	\node at (1,-2) {\huge $[D,a]$};
\end{tikzpicture}}}
\,\,\raisebox{-15pt}{
+
}\,\,
\raisebox{-45pt}{\scalebox{.45}{
 \begin{tikzpicture}[thick]
	\draw[edge] (-.2,2.7) to (1,2.7);
	\draw[edge] (2,-0.1) to (2,1);
	\draw[edge] (2,1) to (1,2.7);
	\draw[edge] (1,2.7) to (3,2.7);
	\draw[edge] (3,2.7) to (2,1);
	\draw[edge] (3,2.7) to (4.2,2.7);
	\ncvertex{1,2.7}
	\ncvertex{2,1}
	\ncvertex{3,2.7}
	\node at (-0.7,2.7) {\huge $V_1$};
	\node at (4.7,2.7) {\huge $V_2$};
	\node at (2,-0.5) {\huge $[D,a]$};
\end{tikzpicture}}}
\,\,\raisebox{-15pt}{
+
}\,\,
\raisebox{-35pt}{\scalebox{.45}{
    \begin{tikzpicture}[thick]
	\draw[edge] (0.5,0) to (2.1,0);
	\draw[edge] (2.1,0) to[out=90,in=90] (3.9,0);
	\draw[edge] (2.1,0) to[out=-90,in=-90] (3.9,0);
	\draw[edge] (3.9,0) to (5.5,0);
	\draw[edge] (4,0) to (4.8,-1.5);
	\ncvertex{2,0}
	\ncvertex{4,0}
	\node at (0,0) {\huge $V_1$};
	\node at (6,0) {\huge $V_2$};
	\node at (5,-2) {\huge $[D,a].$};
\end{tikzpicture}}}
\!\!
}\nonumber
\end{align*}
The Ward identity for the gauge propagator, in combination with the Ward identity for the fermion propagator \eqref{eq:ward} allows us to derive the so-called {\it quantum Ward identity}:
$$
\bbr{aV_1,\ldots,V_n} - \bbr{V_1,\ldots,V_na} = \bbr{{[D,a],V_1,\ldots, V_n}}.
$$
We derived this identity diagrammatically in \cite{NS21b} for low orders; below we give a general derivation. The quantum Ward identity, in combination with the obvious cyclicity, shows that $\bbr{\cdot}$ is a special case of the generic bracket $\brr{\cdot}$ satisfying property \ref{cyclicity general} and \ref{commutation general} on page \pageref{cyclicity general}, and hence allows us to apply Proposition \ref{prop:bB} and Theorem \ref{thm:asymptotic expansion}. We thus obtain our final result: an expansion of the one-loop quantum effective action in terms of cyclic cocycles.

\begin{figure}
\hspace{.05\linewidth}
 \begin{tabular}{p{.181\linewidth}p{.21\linewidth}p{.19\linewidth}p{.20\linewidth}}
\scalebox{.45}{
\begin{tikzpicture}[thick]
	\draw[edge] (0,-1) to (2,0);
	\draw[edge] (4,-1) to (2,0);
	\draw[edge] (2,-1.6) to (2,0);
	\draw[edge] (1.9,0) arc (-90:270:0.8cm);
	\ncvertex{2,0}
\end{tikzpicture}}
&\scalebox{.45}{
\begin{tikzpicture}[thick]
	\draw[edge] (1,0) to (2,2);
	\draw[edge] (0.5,2) to (2,2);
	\draw[edge] (1,4) to (2,2);
	\draw[edge] (2,2) to[out=90,in=90] (4,2);
	\draw[edge] (2,2) to[out=-90,in=-90] (4,2);
	\draw[edge] (4,2) to (5,4);
	\draw[edge] (4,2) to (5.5,2);
	\draw[edge] (4,2) to (5,0);
	\ncvertex{2,2}
	\ncvertex{4,2}
\end{tikzpicture}}
   	&
 \scalebox{.45}{
\begin{tikzpicture}[thick]
	\draw[edge] (0.5,0.9) to (2,1);
	\draw[edge] (0.5,0) to (2,1);
	\draw[edge] (2,1) to (4,1);
	\draw[edge] (2,1) to (3,2.6);
	\draw[edge] (3,2.6) to (4,1);
	\draw[edge] (3,2.6) to (2.5,4);
	\draw[edge] (3,2.6) to (3.5,4);
	\draw[edge] (4,1) to (5.5,0.9);
	\draw[edge] (4,1) to (5.5,0);
	\ncvertex{2,1}
	\ncvertex{4,1}
	\ncvertex{3,2.6}
\end{tikzpicture}}
  &\quad
 \scalebox{.45}{
\begin{tikzpicture}[thick]
	\draw[edge] (0,1) to (1,2);
	\draw[edge] (0,2) to (1,2);
	\draw[edge] (0,3) to (1,2);
	\draw[edge] (1,2) to (2,1);
	\draw[edge] (2,1) to (3,2);
	\draw[edge] (1,2) to (2,3);
	\draw[edge] (2,3) to (3,2);
	\draw[edge] (2,3) to (1,4);
	\draw[edge] (2,3) to (2,4);
	\draw[edge] (2,3) to (3,4);
	\draw[edge] (3,2) to (4,3);
	\draw[edge] (3,2) to (4,2);
	\draw[edge] (3,2) to (4,1);
	\draw[edge] (2,1) to (1,0);
	\draw[edge] (2,1) to (2,0);
	\draw[edge] (2,1) to (3,0);
	\ncvertex{1,2}
	\ncvertex{2,1}
	\ncvertex{3,2}
	\ncvertex{2,3}
\end{tikzpicture}}
 \end{tabular}
 \caption{Relevant one-loop $n$-point functions with increasing number of vertices.}
 \label{table:skel-1l}
\end{figure}
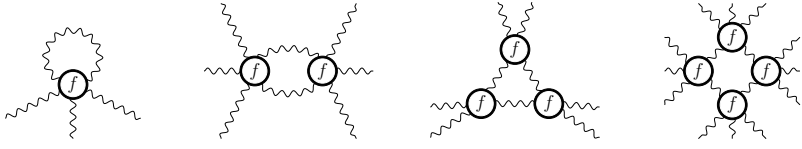

\begin{thm}
There exist $(b,B)$-cocycles $\phi^N$ and $\tilde\psi^N$ (namely, those defined by taking $\brr{\cdot}=\bbr{\cdot}$ in \eqref{eq:def phi_n} and \eqref{eq:psi})
   for which the one-loop quantum effective spectral action can be expanded as
   $$
\sum_{n=1}^\infty \frac{1}{n} \bbr{V,\ldots,V} 
   \sim \sum_{k=1}^\infty \left( \int_{\psi_{2k-1}^N}  \!\!\!\!\! \cs_{2k-1} (A) +\frac 1 {2k} \int_{\phi_{2k}^N} \!\!\!\!\! F^{k} \right).
     $$
   As before, $\tilde\psi_{2k-1}^N=(-1)^{k-1}\tfrac{(k-1)!}{(2k-1)!}\psi_{2k-1}^N$. 
\end{thm}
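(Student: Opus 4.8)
The plan is to deduce the statement entirely from the general machinery already assembled, once we recognize the one-loop bracket $\bbr{\cdot}$ as an instance of the abstract bracket $\brr{\cdot}$. Recall that Proposition \ref{prop:bB} and Theorem \ref{thm:asymptotic expansion} apply to \emph{any} collection of functionals satisfying the cyclicity property \ref{cyclicity general} and the commutation property \ref{commutation general}: from such a bracket they produce the $(b,B)$-cocycles $\phi$ and $\tilde\psi$ (via \eqref{eq:def phi_n} and \eqref{eq:psi}) together with the asymptotic Chern--Simons--Yang--Mills expansion. Thus it suffices to verify that $\brr{\cdot}=\bbr{\cdot}$ satisfies \ref{cyclicity general} and \ref{commutation general}. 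Then $\phi^N$ and $\tilde\psi^N$ are by construction the resulting cocycles, the normalization $\tilde\psi_{2k-1}^N=(-1)^{k-1}\tfrac{(k-1)!}{(2k-1)!}\psi_{2k-1}^N$ is inherited verbatim from Proposition \ref{prop:bB}, and the displayed asymptotic expansion is exactly the conclusion of Theorem \ref{thm:asymptotic expansion}.

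Property \ref{cyclicity general} is immediate from Definition \ref{def:quantum effective SA}, since the defining sum runs over diagrams whose external edges are marked cyclically, so a cyclic relabelling of $V_1,\ldots,V_n$ leaves both the set of diagrams and their amplitudes unchanged. The substance of the proof is therefore the quantum Ward identity
$$
\bbr{aV_1,\ldots,V_n} - \bbr{V_1,\ldots,V_n a} = \bbr{[D,a],V_1,\ldots,V_n},
$$
which is property \ref{commutation general} after using \ref{cyclicity general} to move $[D,a]$ to the end. I would establish it by a sliding argument: insert $a$ immediately before the external edge carrying $V_1$ and transport it through the diagram, following the cyclic order of the external edges, until it arrives immediately after $V_n$. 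Crossing a noncommutative vertex is governed by the vertex Ward identity \eqref{eq:ward}, and crossing a loop edge by the gauge-propagator Ward identity \eqref{eq:ward-gauge}. At each step the term in which $a$ has merely been displaced is cancelled by the next step of the telescope, while a residual term carrying a factor $[D,a]$ is produced; summing, the difference $\bbr{aV_1,\ldots,V_n} - \bbr{V_1,\ldots,V_n a}$ collects precisely the total of all these $[D,a]$-insertions.

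It then remains to identify this total with $\bbr{[D,a],V_1,\ldots,V_n}$, and this combinatorial bookkeeping is where I expect the real difficulty to lie. The point is a bijection between, on one side, the relevant $(n+1)$-point diagrams (planar, one-loop, one-particle-irreducible, with clockwise vertices of degree $\geq 3$ and external edges outside the loop) carrying $[D,a]$ on one external edge, and, on the other side, the two species of insertion produced by the telescope: insertions at a vertex correspond via \eqref{eq:ward} to diagrams in which $[D,a]$ attaches to a vertex of degree $\geq 4$, whereas insertions on a loop edge correspond via \eqref{eq:ward-gauge} to diagrams in which $[D,a]$ sits on a trivalent vertex whose other two edges are both loop edges. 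The delicate part is checking that this correspondence is a genuine bijection exactly respecting the defining conditions of a relevant diagram---in particular that deleting the $[D,a]$ edge either keeps all vertices of degree $\geq 3$ (the vertex case) or forces a bivalent vertex that must be smoothed into a propagator (the gauge-edge case), and that planarity, the one-loop condition, and one-particle-irreducibility are preserved throughout. This is precisely the mechanism that ties the two Ward identities together and accounts for the degree-$\geq 3$ restriction in Definition \ref{def:quantum effective SA}.

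Granting the bijection, $\bbr{\cdot}$ satisfies \ref{cyclicity general} and \ref{commutation general}, so Proposition \ref{prop:bB} yields the cocycles $\phi^N$ and $\tilde\psi^N$ with the stated normalization, and Theorem \ref{thm:asymptotic expansion} delivers the claimed expansion of the one-loop quantum effective spectral action.
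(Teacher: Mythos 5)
Your proposal follows the paper's proof essentially verbatim: reduce the theorem to verifying properties \ref{cyclicity general} and \ref{commutation general} for $\bbr{\cdot}$ and then invoke Proposition \ref{prop:bB} and Theorem \ref{thm:asymptotic expansion}, with \ref{commutation general} (the quantum Ward identity) obtained by telescoping the vertex and gauge-propagator Ward identities across the gap between the external edges $n$ and $1$ and matching the resulting $[D,a]$-insertions bijectively with the relevant $(n+1)$-point diagrams --- exactly what the paper packages as the insertion set $I(G)$, with one insertion when edges $n$ and $1$ share a vertex and three (at $v$, along the connecting loop edge, and at $w$) otherwise. The combinatorial bijection you flag as the delicate point is asserted in the paper with no more detail than you supply, so nothing essential is missing; just make sure the sliding of $a$ goes across the single gap between $V_n$ and $V_1$ rather than the long way around through the other external edges, since only then do the residual $[D,a]$ terms land in the slot required by $\bbr{V_1,\ldots,V_n,[D,a]}$.
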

\begin{proof}
Applying Definition \ref{def:quantum effective SA}, and combining two sums, we obtain
\begin{align*}
	\bbr{aV_1,\ldots,V_n}-\bbr{V_1,\ldots,V_na}=\sum_G \left(\Gamma_N^G(aV_1,\ldots,V_n)-\Gamma_N^G(V_1,\ldots,V_na)\right),
\end{align*}
where the sum is over all \textit{relevant} diagrams $G$, by which we mean the planar one-loop one-particle-irreducible $n$-point noncommutative Feynman diagrams $G$ with clockwise vertices of degree $\geq3$ and external edges outside the loop and marked cyclically.
Let $G$ be a relevant diagram marked $1,\ldots,n$. We let $I(G)$ denote the set of diagrams one can obtain from $G$ by inserting a single gauge edge at any of the places one visits when walking along the outside of the diagram from the external edge $n$ to the external edge $1$. To be precise, if the edges $n$ and $1$ attach to the same noncommutative vertex $v$, we set 
	$$I(G):=\{G'\},$$
where $G'$ is the diagram obtained from $G$ by inserting an external edge marked $n+1$ at $v$ between the edges marked $n$ and $1$. If the edges $n$ and $1$ attach to different vertices $v$ and $w$, respectively, then the edge $e$ succeeding the edge marked $n$ on $v$ necessarily attaches to $w$, preceding the edge marked $1$. In this case, we set
$$I(G):=\{G_n,G_e,G_1\},$$
where $G_n$ is obtained from $G$ by inserting an external edge marked $n+1$ at $v$ between $n$ and $e$, $G_e$ is obtained from $G$ by inserting a noncommutative vertex $v_0$ along $e$ and inserting an external edge marked $n+1$ along the outside of $v_0$, and $G_1$ is obtained from $G$ by inserting an external edge marked $n+1$ at $w$ between $e$ and $1$. By construction of $I(G)$, we find
\begin{align*}
&\llangle a  V_1, \ldots, V_n \rrangle_N^{1L} - 
\llangle V_1, \ldots, V_n  a \rrangle_N^{1L} = \sum_{G}\sum_{G'\in I(G)}
\Gamma_N^{G'}(V_1, \ldots ,V_n,[D,a]).
\end{align*}
The sum over $G$ and $G'$ yields all relevant $n+1$-point diagrams, and, moreover, any relevant $n+1$-point diagram with labels $V_1, \ldots ,V_n,[D,a]$ is obtained in a unique manner from an insertion of an external edge in an $n$-point diagram, as described above. We are therefore left precisely with 
\begin{align*}
\llangle a  V_1, \ldots, V_n \rrangle_N^{1L} - 
\llangle V_1, \ldots, V_n  a \rrangle_N^{1L} = \bbr{V_1,\ldots,V_n,[D,a]}.
\end{align*}
In combination with cyclicity, $\llangle V_1, \ldots, V_n \rrangle_N^{1L} = \llangle V_n , V_1, \ldots, V_{n-1} \rrangle_N^{1L}$, this identity allows us to apply Proposition \ref{prop:bB} and Theorem \ref{thm:asymptotic expansion}. We thus arrive at the conclusion of the theorem.
\end{proof}

We conclude that the passage to the one-loop renormalized spectral action can be realized by a transformation in the space of cyclic cocycles, sending $\phi \mapsto \phi+ \phi^N$ and $\psi \mapsto \psi+ \psi^N$. One could say the theory is therefore one-loop renormalizable in a generalized sense, allowing for infinitely many counterterms, as in \mbox{\cite{GW96}}. Most notably, 
we have stayed within the spectral paradigm of noncommutative geometry.


%

\newcommand{\noopsort}[1]{}\def\cprime{$'$}

\end{document}